\documentclass[letterpaper]{article}

\ifdefined\aaaianonymous
    \usepackage[submission]{aaai2026}
\else
    \usepackage{aaai2026}
\fi

\usepackage{times}
\usepackage{helvet}
\usepackage{courier}
\usepackage[hyphens]{url}
\usepackage{graphicx}
\usepackage{natbib}
\usepackage{caption}
\usepackage{amsmath}
\usepackage{amssymb}
\usepackage{amsthm}

\usepackage{mathtools}
\usepackage{booktabs}
\usepackage{array}
\usepackage{placeins}
\usepackage{float}
\usepackage{multirow}
\newtheorem{theorem}{Theorem}
\newtheorem{corollary}{Corollary}
\newtheorem{proposition}{Proposition}

\theoremstyle{definition}

\let\Cap\undefined  % amssymb defines \Cap as \bigcap; we override
\DeclareMathOperator{\Cap}{Cap}
\DeclareMathOperator{\Rob}{Rob}

\newcommand{\eps}{\varepsilon}
\newcommand{\Astar}{A^{\star}}
\newcommand{\Api}{A_{\pi}}
\newcommand{\Atildepi}{\tilde A_{\pi}}
\newcommand{\Xtilde}{\tilde X}

\title{Capability and Robustness Cannot Both Be Free:\\
An Information-Theoretic Bound for Vision-Language-Action Models}

\author{Jianwei Tai}
\affiliations{School of Internet, Anhui University\\
24012@ahu.edu.cn}

\begin{document}

\maketitle

\begin{abstract}
Vision-Language-Action (VLA) models reach high success rates on clean inputs but collapse under small adversarial perturbations: a $16/255$ PGD attack drops OpenVLA-7B's LIBERO success from $95\%$ to under $5\%$. Whether this trade-off has a theoretical floor was open. We prove that it does. For any VLA policy, capability $I(\Astar;\Api)$ and robustness $I(\Api;\Atildepi)-I(\Api;\delta)$ sum to at most $H(\Astar)+I(X;\Xtilde)$, the task entropy plus adversarial channel capacity. The proof reduces to two applications of the Data Processing Inequality. The pixel-level bound is loose by $\sim 10^3$ nats and serves as a ceiling guarantee; an encoder-specific corollary tightens it by over an order of magnitude, into a regime where realized capability already consumes $5$--$9\%$ of the budget. We validate Theorem~\ref{thm:main} with zero violations across $308$ cells: $252$ closed-form Gaussian-VLA, $48$ OpenVLA-7B$+$LIBERO$+$PGD ($4$ suites $\times$ $4$ $\eps$ $\times$ $3$ seeds), $4$ Square-Attack, and $4$ multi-step ($T{=}10$). A complementary measurability inequality $\Rob_{\text{disc}} \le \Cap_{\text{disc}}$ further holds across $144$ cross-architecture cells spanning OpenVLA, OpenVLA-OFT (continuous-$L_1$), and SmolVLA (flow-matching). The same construction yields three label-free diagnostics: a pre-flight encoder ceiling, a defense-forensics probe that localizes input-side vs.\ language-model intervention, and a head-agnostic robustness ratio comparable across discrete-token, $L_1$-regression, and flow-matching policies. Together these provide the cross-setting axis defense and architecture comparisons currently lack.
\end{abstract}

\ifdefined\aaaianonymous
% \begin{links} suppressed for anonymous submission
\else
%\begin{links}
%    \link{Code}{https://anonymous.example/vla-impossibility}
%\end{links}
\fi

\section{Introduction}
\label{sec:intro}

Vision-Language-Action (VLA) models, large multimodal policies that map an image and a natural-language instruction directly to a robot action, account for much of the past two years' progress in generalist manipulation and are now beginning to operate alongside humans on real hardware. OpenVLA-7B \citep{kim2024openvla} reaches $95.4\%$ success on LIBERO-Spatial; RT-2 \citep{brohan2023rt2} generalizes to unseen object-verb compositions; $\pi_0$ \citep{black2024pi0} scales to dexterous bimanual tasks. Clean-input success rates have climbed steadily year over year. Yet each prediction is a physical action, which turns visual fragility from an accuracy issue into a safety one.

Under attack the picture shifts. Recent empirical studies \citep{robustvla2025,multimodaladv2025,explainadvvla2025} report a consistent pattern: a $16/255$ PGD perturbation imperceptible to humans drops OpenVLA's task success on LIBERO from $95\%$ to under $5\%$. Defenses recover only part of the loss. Adversarial fine-tuning typically restores $10$--$50\%$ of the lost robustness at a cost of $5$--$15\%$ in clean accuracy, and the resulting Pareto frontier is uneven across datasets and architectures. The prior question is whether the trade-off has a theoretical floor at all, or whether better training could deliver clean accuracy and adversarial invariance together without limit. \citet{tsipras2019robustness} settled the analogous question for binary classifiers under specific data distributions; the corresponding statement for action-generating policies has, to our knowledge, never been written down. This paper writes it down.

For any VLA policy $\pi$ producing action $\Api$ from observation $X$, with oracle action $\Astar$ and adversarial perturbation $\delta$ such that $\Xtilde = X + \delta$ and $\Atildepi = \pi(\Xtilde)$:
\begin{equation}
\underbrace{I(\Astar; \Api)}_{\text{Capability}} + \underbrace{I(\Api; \Atildepi) - I(\Api; \delta)}_{\text{Robustness}} \le \underbrace{H(\Astar) + I(X; \Xtilde)}_{\text{Budget}}.
\label{eq:intro-bound}
\end{equation}
The proof reduces to two applications of the Data Processing Inequality plus standard non-negativity facts (Theorem~\ref{thm:main}). The right-hand side depends only on the task and the attacker, so any architectural change must trade capability against robustness within a fixed budget. On current models the bound is loose (slack $\sim 10^3$ nats, dominated by the gap between image-dimensional channel capacity and 7-D action geometry), so it functions as a ceiling guarantee rather than a tight predictor of achievable performance.

The bound is \emph{non-constructive}: it asserts $\Cap + \Rob \le \text{Budget}$ without prescribing a $\pi$ that attains equality. We propose the slack $S(\pi, \eps) = H(\Astar) + I(X;\Xtilde) - \Cap(\pi) - \Rob(\pi) \ge 0$ as a comparable cross-paper metric, allowing a defense paper to report ``we closed $60\%$ of the slack at $\eps{=}4/255$'' rather than absolute robustness numbers that fail to transfer across settings. Our MI estimators are biased; Sec.~\ref{sec:synthetic} catalogs the direction. The bias does not threaten the inequality, since estimator error shrinks the LHS faster than the RHS in our regime, widening the slack rather than producing spurious violations.

The bound also serves as more than a verification target. The same calculation yields three quantities that practitioners can read directly: a $\sim 5$-minute pre-flight encoder ceiling that audits the worst-case channel capacity of a candidate $(\pi, \text{defense}, \eps)$ deployment without running an attack, a defense-forensics shift signature that identifies whether a defense intervenes input-side or LLM-side, and a head-agnostic $\Rob_{\text{disc}}/\Cap_{\text{disc}}$ ratio that places discrete-token, $L_1$-regression, and flow-matching action heads on a common axis. Each is computable from $\le 200$ samples, requires no ground-truth labels, and is comparable across policies, defenses, and architectures.

The contributions of this paper are summarized as follows.

\textbf{1. An information-theoretic ceiling for VLA capability and robustness} (Sec.~\ref{sec:theory}). We prove that for any VLA policy $\pi$, capability (mutual information between policy action and oracle action) and robustness (mutual information preserved under attack, net of trivial channel leakage) sum to at most the task entropy plus the attack channel capacity (Theorem~\ref{thm:main}). The proof reduces to two applications of the Data Processing Inequality and MI non-negativity. Because the budget depends only on the task and the attack, gains in one term cost either the other term or the slack, regardless of policy class. The universal bound is substitutable across defenses but loose by $\sim 10^3$ nats on current models, since image-dimensional channel capacity dwarfs the $7$-D action geometry. An encoder-specific corollary (Cor.~\ref{cor:encoder}) tightens it by $28$--$68\times$ at the cost of becoming policy- and defense-specific (it depends on the realized $\sigma^2_{\delta,\phi}$). Together they separate a universal ceiling from a tight per-experiment diagnostic that underpins the practical tools in Contribution~3.

\textbf{2. Empirical validation across $308$ Theorem-1 cells with zero violations} (Secs.~\ref{sec:synthetic},~\ref{sec:realmodel}). Theorem~\ref{thm:main} is verified on $252$ closed-form Gaussian-VLA cells; $48$ OpenVLA-7B$+$LIBERO PGD cells ($4$ suites $\times$ $4$ $\eps$ $\times$ $3$ seeds, pixel-PCA RHS); $4$ Square-Attack cells (one per LIBERO suite at $\eps{=}8/255$, $200$ queries) that rule out a PGD-specific artifact; and $4$ multi-step DPI cells ($T{=}10$, $30$ trajectories per suite) confirming cumulative slack scales as $T \cdot S_1$. A complementary discrete inequality $\Rob_{\text{disc}} \le \Cap_{\text{disc}}$ further holds across $144$ cross-architecture cells ($3$ architectures $\times$ $4$ LIBERO suites $\times$ $4$ perturbation budgets $\times$ $3$ seeds) spanning OpenVLA, OpenVLA-OFT (continuous-$L_1$), and SmolVLA (flow-matching). The bound holds in every cell, and the encoder-tightened RHS keeps realized capability well within the budget, showing that the constraint is informative at policy-relevant scale.

\textbf{3. A deployment-ready diagnostic toolkit derived from the bound} (Secs.~\ref{sec:defenses},~\ref{sec:cross-model}). The same calculation yields three diagnostics, each computable from $\le 200$ samples on a single GPU without ground-truth labels (Table~\ref{tab:practical-summary}). A \emph{defense-forensics shift signature} tells the developer whether a defense intervenes input-side (where it predictably shifts the encoder budget) or inside the language model (where it does not). A \emph{head-agnostic robustness ratio} $\Rob_{\text{disc}}/\Cap_{\text{disc}}$ places discrete-token, $L_1$-regression, and flow-matching action heads on a common axis where their otherwise incomparable success-rate-under-attack cannot. A $\sim 5$-minute \emph{pre-flight encoder ceiling} audits the worst-case channel capacity of any $(\pi, \text{defense}, \eps)$ before deployment. We propose encoder-specific slack as a comparable cross-paper metric for defense reporting.

\section{Related Work}
\label{sec:related}

\begin{table*}[!t]
\centering
\footnotesize
\setlength{\tabcolsep}{6pt}
\caption{Differentiation matrix; the bottom row is this paper's contribution. Direction Up/Lo marks upper or lower bound. The IT-bound $+$ action-output $+$ ceiling combination was empty before this work.}
\label{tab:differentiation}
\begin{tabular}{llll}
\toprule
\textbf{Work} & \textbf{Output} & \textbf{IT bound} & \textbf{Direction} \\
\midrule
\citet{tsipras2019robustness}    & class label   & accuracy            & Up (impossibility) \\
\citet{robustvla2025}            & 7-D pose      & ---                 & --- (empirical) \\
\citet{llmitbound2025}           & LLM text      & query complexity    & Lo (extraction cost) \\
\citet{compressiongap2026}       & VLA token     & capacity loss       & --- (scaling law) \\
\citet{continualvla2026}         & VLA action    & catastrophic forgetting & --- (task shift) \\
\citet{stablevla2026}            & VLA adapter   & policy stability    & --- (regularizer) \\
\textbf{This work}               & \textbf{VLA action} & \textbf{Cap{+}Rob}  & \textbf{Up (impossibility)} \\
\bottomrule
\end{tabular}
\end{table*}

\subsection{Information-Theoretic Bounds in Deep Learning}

The mutual-information lens on neural networks goes back to \citet{tishby2015deep}'s Information Bottleneck framework, later sharpened into generalization bounds by \citet{xu2017information} via algorithmic stability of input-output MI. The closest antecedent of our bound is \citet{tsipras2019robustness}, who proved that robustness and accuracy are fundamentally at odds for specific data distributions under $\ell_\infty$ attack. We push the framework in three directions. First, the output is an \emph{action distribution} rather than a class probability. Second, the trade-off variables are mutual informations $I(\Astar;\Api)$ and $I(\Api;\Atildepi)$, not accuracies. Third, the leakage debit $-I(\Api;\delta)$, absent from any classifier formulation we know of, turns out to matter for action-generating policies, where the action space can passively transmit attack information.

Beyond classification, \citet{zhang2019trades} decomposed robust error into natural and boundary components. \citet{ross2018improving} earlier proposed input-gradient regularization as a way to get robustness and interpretability at the same time, and \citet{fan2021tale} formalized the empirical-vs.-provable robustness trade as a joint optimization. All of these target classification loss, while we target MI between continuous (or discrete-tokenized) actions.

\paragraph{Concurrent IT-based work.} Four recent works share the IT lens but address orthogonal questions. \citet{llmitbound2025} prove an attack-query \emph{lower} bound for adversarial extraction from LLMs (text output, query budget); we prove an \emph{upper} bound on policy-side capability+robustness (action output, MI). \citet{compressiongap2026} link discrete tokenization to capacity loss in VLA scaling, motivating empirical guidance against discrete heads; we show in Cor.~\ref{cor:discretization} that the same discretization \emph{tightens} our bound, so the perspectives are complementary. \citet{continualvla2026} address forgetting under task shift rather than adversarial perturbation. \citet{stablevla2026} build an IT-motivated VLA adapter; ours is the policy-independent ceiling above it.

\subsection{VLA Models and Adversarial Robustness}

\citet{kim2024openvla} set the open-weights baseline with OpenVLA-7B (LIBERO $95\%$ clean), and a string of empirical attack/defense studies has piled on since. \citet{robustvla2025} report the now-standard $95\% \to {<}5\%$ collapse under $16/255$ PGD. \citet{multimodaladv2025} probe joint vision $+$ language perturbations, \citet{explainadvvla2025} build attribution-guided defenses, \citet{indepth2025} catalog physical perturbations, and \citet{realworld2025} measure real-world domain shift. The same vulnerability shows up in the broader VLM literature. \citet{qi2024visual} demonstrate that a single visual adversarial example can universally jailbreak aligned LLMs, and \citet{fang2026unveiling} expose multi-modal fragility through texture-constrained perturbations. All of these are \emph{empirical} and stop at measuring how robust a VLA or VLM can be made. None states an upper bound on what is possible. Their measurements anchor our $H(\Astar)$ and $I(X;\Xtilde)$ estimates, and our bound supplies the comparison axis their papers lack.

\subsection{Our position}

\noindent\emph{Existing impossibility results bound classifier accuracy under attack \citep{tsipras2019robustness}, while existing VLA work measures empirical fragility \citep{robustvla2025}. We sit between, with an information-theoretic bound for VLA-class policies and both closed-form and real-model validation.} Table~\ref{tab:differentiation} places the bound in the prior literature.

\section{The Capability--Robustness Bound}
\label{sec:theory}

All MI and entropy quantities are in nats. Random variables are capitalized, and realizations lowercase.

\subsection{Notation and setup}

A Vision-Language-Action (VLA) policy $\pi$ maps an observation $X$ (a jointly-encoded image $+$ language instruction) to a distribution over actions $\Api = \pi(X) \in \mathcal{A}$. For LIBERO and OpenVLA, $\mathcal{A} \subseteq \mathbb{R}^7$ (6-DoF end-effector pose plus a 1-bit gripper). The oracle action $\Astar$ is the expert demonstration on the same observation. The joint distribution $p(X, \Astar)$ is fixed by the task and demonstrator; $\pi$ defines $p(\Api \mid X)$.

The attacker observes $X$ and produces $\Xtilde = X + \delta$ with $\|\delta\|_\infty \le \eps$, with $\delta$ allowed to depend on $X$ (white-box) and on $\pi$. The attacked action is $\Atildepi = \pi(\Xtilde)$. The only Markov assumption we need is
\begin{equation}
\Api = \pi(X), \qquad \Atildepi = \pi(\Xtilde), \qquad \Xtilde = X + \delta. \tag{M}
\label{eq:markov}
\end{equation}
For stochastic $\pi$, any internal randomness used to compute $\Atildepi$ is assumed independent of $(X, \Api, \delta)$, so the Markov chain $\Api - \Xtilde - \Atildepi$ holds. Both this and the deterministic case (OpenVLA's argmax decoding, used in Sec.~\ref{sec:realmodel}) preserve the proof's two DPI applications.

\paragraph{Capability and Robustness.}
$\Cap(\pi) \triangleq I(\Astar; \Api)$ measures how informative $\Api$ is about the oracle action. $\Rob(\pi) \triangleq I(\Api; \Atildepi) - I(\Api; \delta)$ measures invariance under attack, debited by the part of $\Atildepi$ that is just $\delta$ leaking into the action. The subtraction is structurally important: it makes the bound policy-class-quantifiable rather than only architecture-quantifiable. A degenerate policy that copies $\delta$ into its output, or any adversarial-training scheme that explicitly couples $\Api$ to $\delta$ (e.g., randomized smoothing with a $\delta$-aware noise injection), would inflate $I(\Api;\delta)$ and the unbounded version of the bound would be lax against it. On naturally trained CE-loss VLAs the term is small (Sec.~\ref{sec:realmodel}), so the bound's empirical conclusions are robust to its inclusion.

\paragraph{Two estimators of $\Cap$.} We report two plug-in estimates of $\Cap$, both Miller--Madow corrected. The \emph{ground-truth} estimator $\Cap_{\text{disc}}^{(\text{GT})}$ uses $N{=}5{,}000$ demo pairs $(\Astar, \Api)$ to measure the policy's agreement rate with the oracle; this is the per-suite headline number in Table~\ref{tab:openvla-suite} ($\Cap_{\text{disc}} \in \{2.68, 6.85, 7.54, 9.04\}$ across the four LIBERO suites). The \emph{self-consistency} estimator $\Cap_{\text{disc}}^{(\text{SC})}$ uses the per-cell $N{=}200$ attack-cell clean inputs and equals $H_K(\Api)$ when the policy is a deterministic function of the input; it is the head-agnostic LHS substitute used in the cross-architecture comparison (Table~\ref{tab:cross-model}, $\Cap_{\text{disc}} = 7.54$ on every OpenVLA cell). Both are valid under DPI but answer different questions: $\Cap^{(\text{GT})}$ measures expert imitation, $\Cap^{(\text{SC})}$ measures decoder bandwidth. They appear in distinct tables and are not compared head-to-head.

\subsection{Statement and Proof}

\begin{theorem}[Capability--Robustness Bound]
\label{thm:main}
For any VLA policy $\pi$ satisfying \eqref{eq:markov} with discrete (or discretized) action space $\mathcal{A}$, any joint distribution $p(X, \Astar)$, and any (possibly adaptive) attacker producing $\Xtilde = X + \delta$,
\begin{equation}
\Cap(\pi) + \Rob(\pi) \;\le\; H(\Astar) + I(X; \Xtilde).
\label{eq:thm1}
\end{equation}
\end{theorem}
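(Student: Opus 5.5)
The plan is to bound the two terms on the left-hand side of \eqref{eq:thm1} separately and then add the bounds.

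\textbf{Capability.} Because $\mathcal{A}$ is discrete (or discretized), entropy is nonnegative and mutual information is at most entropy. Hence
\begin{equation*}
\Cap(\pi) = I(\Astar;\Api) = H(\Astar) - H(\Astar \mid \Api) \le H(\Astar).
\end{equation*}
This is where the discreteness hypothesis is used; for continuous actions $H(\Astar)$ could be negative or infinite. No Markov structure is needed for this step.

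\textbf{Robustness.} I would drop the leakage debit, using $I(\Api;\delta) \ge 0$, to get $\Rob(\pi) \le I(\Api;\Atildepi)$. Then I would apply the Data Processing Inequality twice.

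\emph{First application.} Under \eqref{eq:markov}, the chain $\Api - \Xtilde - \Atildepi$ holds, because the internal randomness producing $\Atildepi$ is independent of $(X,\Api,\delta)$. This gives
\begin{equation*}
I(\Api;\Atildepi) \le I(\Api;\Xtilde).
\end{equation*}

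\emph{Second application.} I need the chain $\Xtilde - X - \Api$, and this is the delicate point. The attacker may depend on $X$ and on $\pi$, but its randomness must be independent of the policy's internal coin used to produce $\Api$. I would make this explicit as part of (M): $\delta$ is generated from $X$, and possibly from $\pi$ as a fixed function, but not from the realized sample $\Api$. Under this convention $\Api$ and $\Xtilde$ are conditionally independent given $X$, so
\begin{equation*}
I(\Api;\Xtilde) \le I(X;\Xtilde).
\end{equation*}
If $\pi$ is deterministic, as with argmax decoding, then $\Api$ is a function of $X$ and the chain holds for any adaptive attacker.

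\textbf{Combining.} Adding the two bounds gives
\begin{equation*}
\Cap(\pi) + \Rob(\pi) \le H(\Astar) + I(X;\Xtilde),
\end{equation*}
which is \eqref{eq:thm1}. The leakage term was discarded only via nonnegativity, so the bound is valid but loose by exactly $I(\Api;\delta)$ plus the two DPI gaps. I would record this, since it matches the slack interpretation used later.

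The main obstacle is justifying the second DPI chain $\Xtilde - X - \Api$ for an adaptive attacker and a stochastic policy. The first thing I would try is to state the independence of the attacker's and the policy's randomness as part of assumption (M).
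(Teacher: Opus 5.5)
Your proof is correct and follows the paper's argument exactly. Both bound $I(\Astar;\Api)\le H(\Astar)$, drop the leakage debit by nonnegativity of mutual information, and chain the two DPI steps $I(\Api;\Atildepi)\le I(\Api;\Xtilde)\le I(X;\Xtilde)$. Your explicit requirement that the attacker's randomness be independent of the policy's internal coin, so that $\Api \perp \delta \mid X$, is a useful sharpening: the paper justifies the chain $\Api - X - \Xtilde$ only by treating $\Api$ as a deterministic function of $X$, even though its setup also admits stochastic $\pi$.
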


\noindent\emph{Scope.} The bound requires $H(\Astar)$ to be a well-defined (finite, non-negative) Shannon entropy, which holds whenever $\Astar$ takes values in a finite set. OpenVLA, RT-2, and other tokenized VLAs discretize each action dimension into $256$ bins, so the theorem applies directly. For continuous-action policies ($\pi_0$, diffusion policies), the bound holds with $H(\Astar)$ replaced by $\log|\mathcal{A}_q|$ after any deterministic quantizer $q$ (Cor.~\ref{cor:discretization}), since MI is invariant to invertible transforms and quantization can only reduce it.

\begin{figure}[t]
\centering
\includegraphics[width=\columnwidth]{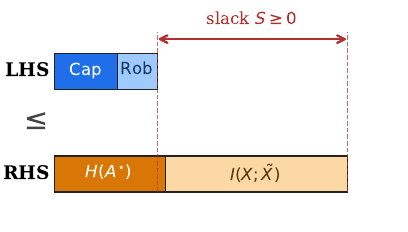}
\caption{Theorem~\ref{thm:main} schematic: capability $+$ robustness (LHS) is bounded by task entropy $+$ attack channel capacity (RHS); the gap is slack $S \ge 0$.}
\label{fig:schematic}
\end{figure}

\begin{proof}
We decompose the LHS into three terms and bound each.

\noindent\emph{(i) Capability.} By definition of mutual information,
\begin{equation}
I(\Astar; \Api) = H(\Astar) - H(\Astar \mid \Api) \le H(\Astar). \tag{i}
\end{equation}
Equality iff $H(\Astar\mid \Api)=0$, i.e.\ $\Astar$ is determined by $\Api$ (equivalently, $I(\Astar;\Api)=H(\Astar)$).

\noindent\emph{(ii) Coupling under attack.} We claim
\begin{equation}
I(\Api; \Atildepi) \le I(X; \Xtilde). \tag{ii}
\end{equation}
Two applications of the Data Processing Inequality (DPI). First, $\Api$ is a deterministic function of $X$, so $\Api - X - \Xtilde$ is Markov, giving $I(\Api; \Xtilde) \le I(X; \Xtilde)$. Second, $\Atildepi$ is a deterministic function of $\Xtilde$, so $\Api - \Xtilde - \Atildepi$ is Markov, giving $I(\Api; \Atildepi) \le I(\Api; \Xtilde)$. Chaining gives $I(\Api; \Atildepi) \le I(\Api; \Xtilde) \le I(X; \Xtilde)$.

\noindent\emph{(iii) Leakage debit.} Since mutual information is non-negative,
\begin{equation}
- I(\Api; \delta) \le 0. \tag{iii}
\end{equation}

Adding (i), (ii), and (iii) yields \eqref{eq:thm1}.
\end{proof}

\paragraph{The leakage debit.}
Step (iii) drops a non-negative term, so removing it would only \emph{loosen} the bound. We keep it because $\Rob$ would otherwise reward a degenerate policy: one that simply embeds $\delta$ into its action would achieve high $I(\Api; \Atildepi)$ without any genuine invariance, and the $-I(\Api;\delta)$ debit cancels exactly that case.

\paragraph{Magnitude of the budget.}
The RHS depends only on the task ($H(\Astar)$) and the attacker ($I(X;\Xtilde)$), not on $\pi$. No architectural change to the policy can move the budget. For OpenVLA on LIBERO, $H(\Astar) \approx 26$ nats (token-level) and $I(X;\Xtilde) \approx 5{,}000$ nats at $\eps=4/255$ (PCA-tightened upper bound; Sec.~\ref{sec:realmodel}). The pixel-level budget is loose because the image-dimensional channel capacity far exceeds what the 7-D action space can exploit. The encoder-specific bound (Cor.~\ref{cor:encoder}) tightens this to $86$--$142$ nats at $\eps=8/255$ across the four LIBERO suites on vanilla OpenVLA, of which the discretized $\Cap \approx 7.5$ nats already consumes $5$--$9\%$. Sec.~\ref{sec:realmodel} measures all terms directly.

\subsection{Tightness}

\begin{corollary}[Tightness of Theorem~\ref{thm:main}]
\label{cor:adaptive}
Equality in \eqref{eq:thm1} holds iff
(i) $\Astar$ is determined by $\Api$ (equivalently, $I(\Astar;\Api)=H(\Astar)$);
(ii) $\pi$ retains all $X$-information relevant to $\Atildepi$, i.e.\ $\pi$ has no information bottleneck on the support of $\Xtilde$; and
(iii) $\delta \perp \Api$. A sufficient condition is $\delta \perp X$ (which implies $\delta \perp \Api$ since $\Api$ is $X$-measurable); the converse holds when $\pi$ is injective.
\end{corollary}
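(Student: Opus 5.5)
The plan is to read equality off the proof of Theorem~\ref{thm:main}. There the inequality \eqref{eq:thm1} was obtained by adding three separate inequalities, (i), (ii) and (iii). The sum of three inequalities of the form $a_j \le b_j$ is an equality if and only if every summand is tight, because each gap $b_j - a_j \ge 0$ and the total slack $S$ is their sum. So I would first write
\[
S \;=\; H(\Astar\mid\Api) \;+\; \bigl[I(X;\Xtilde)-I(\Api;\Xtilde)\bigr] \;+\; \bigl[I(\Api;\Xtilde)-I(\Api;\Atildepi)\bigr] \;+\; I(\Api;\delta).
\]
Every bracket is nonnegative, by the two DPI steps and nonnegativity of MI. Hence $S=0$ if and only if all four terms vanish, and the rest of the proof characterises each vanishing term.

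\textbf{Condition (i).} The first term vanishes iff $H(\Astar\mid\Api)=0$. This holds iff $\Astar$ is a.s.\ a function of $\Api$, which is exactly condition (i). In the discrete setting this is immediate.

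\textbf{Condition (ii).} The two DPI gaps vanish iff the reversed chains are Markov, that is, $X - \Api - \Xtilde$ and $\Api - \Atildepi - \Xtilde$. In words, $\Api$ is a sufficient statistic of $X$ for $\Xtilde$, and $\Atildepi$ is a sufficient statistic of $\Xtilde$ for $\Api$. I would use the standard equality condition of the DPI, $I(U;W)=I(V;W)$ for $U - V - W$ iff $V - U - W$ is also Markov. This is the formal content of "no information bottleneck on the support of $\Xtilde$", and I would state condition (ii) precisely as this pair of sufficiency conditions.

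\textbf{Condition (iii) and the main obstacle.} The last term vanishes iff $\Api \perp \delta$. For the sufficient condition: if $\delta\perp X$, then, since $\Api=\pi(X)$ is $X$-measurable (and any internal randomness is independent of $\delta$), we get $\delta\perp\Api$. For the converse, assume $\pi$ is injective. Then $X=\pi^{-1}(\Api)$ is $\Api$-measurable, so $I(X;\delta)=I(\Api;\delta)=0$ and hence $\delta\perp X$.

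I expect the main obstacle to be the interaction between the conditions rather than any single one. In particular, with $\delta\perp X$, condition (ii) and condition (i) must be simultaneously satisfiable. I would not claim joint achievability, since the corollary is stated only as an iff on the slack decomposition. I would also flag that for a stochastic $\pi$ the sufficiency statement in (ii) must be read conditionally on independent internal randomness. Finally, for the injective converse I would assume $\pi$ deterministic, so that $X$ is recoverable from $\Api$.
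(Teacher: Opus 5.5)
Your proposal is correct and follows the argument the paper intends: the paper gives no separate proof, but reads the three conditions off the tightness of steps (i)--(iii) in the proof of Theorem~\ref{thm:main}. Your four-term slack decomposition, together with the DPI equality conditions $X - \Api - \Xtilde$ and $\Api - \Atildepi - \Xtilde$, makes the paper's informal condition (ii) precise.
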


\noindent Adaptive PGD attackers craft $\delta$ as a function of $X$, so (iii) fails by construction. This is why empirical Pareto frontiers sit strictly inside the bound, since the bound is tight only for oblivious noise. Sec.~\ref{sec:realmodel} confirms it. On OpenVLA, slack remains $\sim 10^3$ nats even with PCA-tightened RHS, dominated by the dimensional gap between image channel capacity and 7-D action geometry.

\subsection{Action Discretization}

OpenVLA discretizes each of $7$ action dimensions into $256$ bins ($8$ bits per dim, about $38.8$ nats total). Let $\Astar_q = q(\Astar)$ be the quantized oracle.

\begin{corollary}[Discretization tightens the bound]
\label{cor:discretization}
For any deterministic quantizer $q : \mathcal{A} \to \mathcal{A}_q$,
\begin{align}
\Cap(q\circ\pi) + \Rob(q\circ\pi) &\le H(\Astar_q) + I(X; \Xtilde) \notag \\
&\le \log|\mathcal{A}_q| + I(X;\Xtilde).
\end{align}
\end{corollary}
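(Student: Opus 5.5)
The plan is to view $q\circ\pi$ as a new policy and check that it satisfies the hypotheses of Theorem~\ref{thm:main}. Then we apply the theorem with the quantized oracle $\Astar_q$ playing the role of $\Astar$, and finish with the elementary bound of entropy by log-cardinality.

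\emph{Step 1: $q\circ\pi$ satisfies \eqref{eq:markov}.} Set $\Api' = q(\Api)$ and $\Atildepi' = q(\Atildepi)$. Since $q$ is deterministic, $\Api'$ is a function of $X$ together with any internal randomness of $\pi$. Likewise $\Atildepi'$ is a function of $\Xtilde$ together with independent internal randomness. Hence the chains $\Api' - X - \Xtilde$ and $\Api' - \Xtilde - \Atildepi'$ remain Markov, which is all the proof of Theorem~\ref{thm:main} uses in step (ii). The action space $\mathcal{A}_q$ is finite, so the discreteness hypothesis holds. The attacker is unchanged, so $I(X;\Xtilde)$ is the same quantity as before.

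\emph{Step 2: apply the theorem with oracle $\Astar_q$.} Here I must fix an interpretation of $\Cap(q\circ\pi)$. The natural reading, matching the corollary's right-hand side, is $\Cap(q\circ\pi)=I(\Astar_q; q(\Api))$: capability is measured against the quantized oracle, as is done for tokenized VLAs. Theorem~\ref{thm:main} holds for any joint distribution of observation and oracle, so I apply it to $p(X,\Astar_q)$, obtaining
\begin{equation*}
\Cap(q\circ\pi)+\Rob(q\circ\pi)\le H(\Astar_q)+I(X;\Xtilde).
\end{equation*}
If instead capability is measured against the raw $\Astar$, I would fall back on step (i) directly. That quantity satisfies $I(\Astar;q(\Api))\le H(q(\Api))\le\log|\mathcal{A}_q|$, which still yields the second inequality. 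The first inequality would then need $I(\Astar; q(\Api))\le H(\Astar_q)$, which can fail in general. I therefore commit to the quantized-oracle reading and state it explicitly.

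\emph{Step 3: cardinality bound.} Since $\Astar_q$ takes values in the finite set $\mathcal{A}_q$, we have $H(\Astar_q)\le\log|\mathcal{A}_q|$, with equality for the uniform distribution. This gives the second line. For OpenVLA, $\log|\mathcal{A}_q| = 7\log 256\approx 38.8$ nats.

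\emph{Main obstacle.} The only delicate point is definitional rather than technical: pinning down which oracle $\Cap(q\circ\pi)$ refers to, as discussed in Step 2. One could additionally remark that the robustness term does not increase under quantization relative to the continuous policy, since by DPI $I(q(\Api);q(\Atildepi))\le I(\Api;\Atildepi)$. That remark is not needed for the stated inequality, however, because the leakage debit is simply dropped by non-negativity, exactly as in step (iii).
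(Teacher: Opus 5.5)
Your proof is correct and is the argument the paper relies on: the paper gives no separate proof, but its scope remark and its use of this corollary in Cor.~\ref{cor:continuous} amount to applying Theorem~\ref{thm:main} to $q\circ\pi$ with the quantized oracle $\Astar_q$, then bounding $H(\Astar_q)\le\log|\mathcal{A}_q|$. Your choice $\Cap(q\circ\pi)=I(\Astar_q;q(\Api))$ matches the paper's own usage, since it quantizes both $\Astar$ and $\Api$ when computing $\Cap_q$ in Sec.~\ref{sec:achievability}; your observation that the first inequality can fail if capability is measured against the raw $\Astar$ is correct and worth stating explicitly, because the paper leaves that point implicit.
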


\noindent Discrete-token VLAs (RT-2, OpenVLA) have a strictly tighter capability ceiling than continuous-action policies on the same task. We therefore expect the trade-off cliff to appear earlier for tokenized architectures. This complements rather than contradicts the Compression Gap argument of \citet{compressiongap2026}: discrete tokenization loses representational capacity (their finding), \emph{and} the loss reduces both LHS and the discrete budget proportionally, so Theorem~\ref{thm:main} remains binding regardless of bin width.

\subsection{Continuous-Action Policies}
\label{sec:continuous-action}

Diffusion policies, $\pi_0$, and other continuous-action VLAs do not tokenize $\Astar$. Theorem~\ref{thm:main} still applies through any deployment-time precision $q > 0$ (sensor LSB, control-loop period, simulator timestep), since every realized action is digitally encoded.

\begin{corollary}[Continuous-action VLAs]
\label{cor:continuous}
Let $\pi$ produce continuous $\Api \in \mathbb{R}^{d_a}$ with bounded support $\mathcal{A} \subseteq [-R, R]^{d_a}$ and deployment precision $q > 0$ along each dimension. Let $q : \mathcal{A} \to \mathcal{A}_q$ denote the corresponding uniform quantizer. Then $q\circ\pi$ satisfies Theorem~\ref{thm:main} with
\begin{equation}
\Cap(q\circ\pi) + \Rob(q\circ\pi) \;\le\; d_a \log\!\big(2R/q\big) + I(X;\Xtilde).
\label{eq:cont-action}
\end{equation}
\end{corollary}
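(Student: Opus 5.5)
The plan is to reduce Cor.~\ref{cor:continuous} to Theorem~\ref{thm:main} applied to the quantized policy $q\circ\pi$, and then bound the resulting entropy term by a cardinality count, as in Cor.~\ref{cor:discretization}. The first step is to check that $q\circ\pi$ satisfies \eqref{eq:markov}. Since $q$ is deterministic, $q(\Api)$ is a (possibly randomized) function of $X$ alone, and $q(\Atildepi)$ is a function of $\Xtilde$ alone, with the same independent internal randomness. The chains $q(\Api) - X - \Xtilde$ and $q(\Api) - \Xtilde - q(\Atildepi)$ are therefore Markov. This is exactly what the two DPI applications in step (ii) of the proof of Theorem~\ref{thm:main} require. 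Step (iii) is just nonnegativity of $I(q(\Api);\delta)$, so it transfers unchanged.

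The second step handles the oracle side. The oracle $\Astar$ is continuous, so $H(\Astar)$ need not be finite. I would therefore read $\Cap(q\circ\pi)$ as $I(\Astar_q; q(\Api))$ with $\Astar_q = q(\Astar)$, following the convention of Cor.~\ref{cor:discretization}. Step (i) then gives $\Cap(q\circ\pi) \le H(\Astar_q)$.

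If instead one insists on $I(\Astar; q(\Api))$ with unquantized $\Astar$, I would use the symmetric bound $I(\Astar; q(\Api)) \le H(q(\Api))$. This holds because $q(\Api)$ is discrete, and it gives the same cardinality bound. Either way, the capability term is at most the entropy of a variable supported on $\mathcal{A}_q$.

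The third step is the counting bound. The uniform quantizer with cell width $q$ on $[-R,R]$ has at most $\lceil 2R/q\rceil$ cells per coordinate. Hence $|\mathcal{A}_q| \le \lceil 2R/q \rceil^{d_a}$, and the entropy is at most $d_a \log\lceil 2R/q\rceil$. Combining this with steps (ii) and (iii) yields \eqref{eq:cont-action}.

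I expect the main obstacle to be this ceiling. Strictly, $\log\lceil 2R/q\rceil$ can exceed $\log(2R/q)$ when $2R/q$ is not an integer. I would handle this by assuming that $q$ divides $2R$, which is natural for deployment precision (e.g.\ 256 bins over $[-1,1]$). Alternatively, I would use half-open cells anchored at $-R$, so that exactly $2R/q$ cells cover $[-R,R)$, with the endpoint $R$ merged into the last cell.

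A secondary subtlety is that $I(X;\Xtilde)$ may be infinite for a deterministic additive channel. In that case the bound holds trivially, and it is only informative when the RHS is finite, as in the PCA or noise-regularized estimates used later.
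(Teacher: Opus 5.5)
Your proposal is correct and is essentially the paper's argument: the paper counts $|\mathcal{A}_q| \le (2R/q)^{d_a}$ to get $H(\Astar_q) \le d_a\log(2R/q)$ and substitutes this into Cor.~\ref{cor:discretization}, which is your three steps with the Markov check for $q\circ\pi$ left implicit. Your ceiling caveat is a real refinement that the paper glosses over, since its count silently needs $2R/q\in\mathbb{N}$ (your half-open-cell variant handles the endpoint $R$ but still presupposes divisibility, so it complements that assumption rather than replacing it); and your fallback bound $I(\Astar;q(\Api)) \le H(q(\Api))$ has the small advantage that only $\Api$, not $\Astar$, needs to lie in $[-R,R]^{d_a}$.
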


\noindent\emph{Proof.} $|\mathcal{A}_q| \le (2R/q)^{d_a}$ gives $H(\Astar_q) \le d_a \log(2R/q)$; substitute into Cor.~\ref{cor:discretization}. \qed

\noindent The bound therefore remains non-vacuous on continuous-action VLAs at deployment precision: $H(\Astar_q) = O(d_a \log(R/q))$ stays of the same order as OpenVLA's $\approx 27$ nats.\footnote{Concrete values: for $\pi_0$ (7-DoF$+$gripper, $d_a{=}8$, joint-space $q{=}10^{-3}$ rad, $R{=}\pi$), $H(\Astar_q) \le 8\log(2000\pi) \approx 70$ nats; for Diffusion Policy on a 7-DoF setup with $q{=}10^{-2}$, $R{=}1$, $H(\Astar_q) \le 7\log 200 \approx 37$ nats.}

\subsection{Multi-step Policies}
\label{sec:multistep}

Deployed VLAs roll out for $T$ steps, with action $A_{\pi,t} = \pi(X_t, H_{t-1})$ conditioned on history $H_{t-1} = (X_{1:t-1}, A_{\pi,1:t-1})$. The attacker may perturb every frame, with $\Xtilde_t = X_t + \delta_t$ and $\|\delta_t\|_\infty \le \eps$. Define the episode-level quantities $\Cap_T \triangleq \sum_t I(\Astar_t; A_{\pi,t} \mid H_{t-1})$ and $\Rob_T \triangleq \sum_t [I(A_{\pi,t}; \tilde A_{\pi,t} \mid H_{t-1}) - I(A_{\pi,t}; \delta_t \mid H_{t-1})]$.

\begin{corollary}[Multi-step extension]
\label{cor:multistep}
Under \eqref{eq:markov} applied per step,
\begin{align}
\Cap_T + \Rob_T \;\le\; \sum_{t=1}^{T} \big[ &H(\Astar_t \mid H_{t-1}) \notag \\
&+ I(X_t; \Xtilde_t \mid H_{t-1}) \big].
\end{align}
\end{corollary}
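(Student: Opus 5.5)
The plan is to prove the multi-step bound term by term: fix $t$, condition everything on the history $H_{t-1}=h$, and rerun the three steps of Theorem~\ref{thm:main} under the conditional law $p(\cdot \mid H_{t-1}=h)$. Averaging over $h$ and summing over $t$ then gives the result. No chain rule across time steps is needed, because $\Cap_T$, $\Rob_T$ and the right-hand side are each defined as sums of per-step conditional quantities.

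\emph{Step 1 (per-step conditional Markov structure).} Under \eqref{eq:markov} applied per step, $A_{\pi,t}=\pi(X_t,H_{t-1})$ and $\tilde A_{\pi,t}=\pi(\Xtilde_t,H_{t-1})$, with $\Xtilde_t=X_t+\delta_t$. For fixed $h$ the policy is a fixed map $x\mapsto\pi(x,h)$, and any internal randomness is independent of $(X_t,A_{\pi,t},\delta_t)$ given $h$. So, conditionally on $H_{t-1}=h$, we have the Markov chains
\[
A_{\pi,t}-X_t-\Xtilde_t \quad\text{and}\quad A_{\pi,t}-\Xtilde_t-\tilde A_{\pi,t}.
\]
The attacker may choose $\delta_t$ adaptively from $X_t$ and $h$; this is harmless, exactly as in the single-step case, since $\Xtilde_t$ is then just some channel output of $(X_t,h)$.

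\emph{Step 2 (three conditional bounds).} Fix $h$ and work under the conditional law.
\begin{itemize}
\item[(i)] $I(\Astar_t;A_{\pi,t}\mid H_{t-1}=h)\le H(\Astar_t\mid H_{t-1}=h)$. This holds because conditional entropy is non-negative, and $\Astar_t$ is discrete so the entropy is finite.
\item[(ii)] $I(A_{\pi,t};\tilde A_{\pi,t}\mid H_{t-1}=h)\le I(A_{\pi,t};\Xtilde_t\mid H_{t-1}=h)\le I(X_t;\Xtilde_t\mid H_{t-1}=h)$. This is the DPI applied twice to the chains from Step 1.
\item[(iii)] $-I(A_{\pi,t};\delta_t\mid H_{t-1}=h)\le 0$, by non-negativity of mutual information.
\end{itemize}
Adding the three and averaging over $h\sim p(H_{t-1})$ turns each pointwise quantity into the corresponding conditional mutual information or entropy. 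This gives the per-step inequality
\[
I(\Astar_t;A_{\pi,t}\mid H_{t-1}) + I(A_{\pi,t};\tilde A_{\pi,t}\mid H_{t-1}) - I(A_{\pi,t};\delta_t\mid H_{t-1}) \le H(\Astar_t\mid H_{t-1}) + I(X_t;\Xtilde_t\mid H_{t-1}).
\]
Summing over $t=1,\dots,T$ yields the corollary.

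\emph{Main obstacle.} The delicate point is Step 1, specifically justifying the second Markov chain conditionally on $H_{t-1}$. The issue is that in a real rollout the history may itself depend on earlier perturbations $\delta_{1:t-1}$. Then it is not obvious which history the clean action $A_{\pi,t}$ and the attacked action $\tilde A_{\pi,t}$ are both conditioned on.

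I would handle this by reading ``\eqref{eq:markov} applied per step'' literally. Both actions are evaluated on the same $H_{t-1}$, which is what the definitions of $\Cap_T$ and $\Rob_T$ already assume. Given $H_{t-1}$, the only remaining randomness feeding $\tilde A_{\pi,t}$ is $\Xtilde_t$ together with independent policy noise. I would state this as an explicit assumption: $\tilde A_{\pi,t}$ is conditionally independent of $(X_t,A_{\pi,t})$ given $(\Xtilde_t,H_{t-1})$. With that assumption, the rest is routine.
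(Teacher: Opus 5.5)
Your proposal is correct and follows the paper's proof. The paper's argument is a single line, a term-by-term application of Theorem~\ref{thm:main} to the conditional distribution at each step, and your plan of conditioning on $H_{t-1}=h$, rerunning steps (i)--(iii), averaging over $h$, and summing over $t$ is that argument written out. The explicit assumption you add (both $A_{\pi,t}$ and $\tilde A_{\pi,t}$ evaluated on the same history, with $\tilde A_{\pi,t}$ conditionally independent of $(X_t,A_{\pi,t})$ given $(\Xtilde_t,H_{t-1})$) is what the paper leaves implicit in ``\eqref{eq:markov} applied per step''.
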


\noindent The proof is just term-by-term application of Theorem~\ref{thm:main} to the conditional distribution at each step. The RHS is the rollout-summed task entropy and per-step channel capacity, still policy-independent.

\subsection{Encoder-Specific Tightening}
\label{sec:encoder-bound}

The pixel-level $I(X;\Xtilde)$ is loose because most image dimensions are irrelevant to the policy. For any VLA of the form $\pi = f \circ \phi$ where $\phi$ is a frozen encoder (e.g.\ DINOv2$+$SigLIP in OpenVLA), DPI gives a tighter family-specific bound.

\begin{corollary}[Encoder-specific bound]
\label{cor:encoder}
Let $\phi: \mathcal{X} \to \mathbb{R}^d$ be a deterministic encoder and $\pi = f \circ \phi$. Then
\begin{equation}
\Cap(\pi) + \Rob(\pi) \;\le\; H(\Astar) + I(\phi(X); \phi(\Xtilde)).
\label{eq:encoder-bound}
\end{equation}
\end{corollary}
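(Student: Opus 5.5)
The plan is to rerun the three-term decomposition from the proof of Theorem~\ref{thm:main} and change only step (ii), where the pixel channel is replaced by the encoder channel. Write $Z = \phi(X)$ and $\tilde Z = \phi(\Xtilde)$. Since $\phi$ is deterministic and $\pi = f\circ\phi$, we have $\Api = f(Z)$ and $\Atildepi = f(\tilde Z)$. For stochastic $f$, we use the same independence assumption on the internal randomness as in \eqref{eq:markov}. Steps (i) and (iii) carry over unchanged: $I(\Astar;\Api)\le H(\Astar)$, and $-I(\Api;\delta)\le 0$. Neither step involves the attack channel, so they impose no new requirements.

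The substantive step is the encoder analogue of (ii):
\begin{equation}
I(\Api;\Atildepi) \le I(Z;\tilde Z).
\end{equation}
We obtain it from two DPI applications through the chain $\Api - Z - \tilde Z - \Atildepi$.

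\textbf{First application.} $\Api$ is a function of $Z$, plus independent randomness in the stochastic case. Hence $\Api - Z - \tilde Z$ is Markov, and $I(\Api;\tilde Z)\le I(Z;\tilde Z)$.

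\textbf{Second application.} $\Atildepi$ is a function of $\tilde Z$, with randomness independent of $(X,\Api,\delta)$. Hence $\Api - \tilde Z - \Atildepi$ is Markov, and $I(\Api;\Atildepi)\le I(\Api;\tilde Z)$.

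Chaining the two inequalities gives the claim. Adding (i), the new (ii), and (iii) yields \eqref{eq:encoder-bound}.

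The step that needs care is the first Markov chain $\Api - Z - \tilde Z$. The attacker is adaptive, so $\delta$, and therefore $\tilde Z$, may depend on $X$ through more than just $Z$. The chain only requires that, given $Z$, the action $\Api$ carries no further information, i.e.\ that $\Api$ is conditionally independent of everything else given $Z$. This holds because $\Api$ is a (randomized) function of $Z$ alone, with randomness independent of $(X,\delta)$. It does not matter how $\tilde Z$ was generated from $X$. I would state this explicitly so the adaptive case is covered.

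As a consistency check, I would also note that $I(Z;\tilde Z)\le I(X;\Xtilde)$. This follows from the same two-step DPI argument applied to $Z - X - \Xtilde - \tilde Z$. So Cor.~\ref{cor:encoder} indeed tightens Theorem~\ref{thm:main}. If $\phi$ is only part of the pipeline, for instance when the language instruction bypasses the vision encoder, I would absorb the unperturbed components into $\phi$'s output so that $\pi = f\circ\phi$ still holds literally.
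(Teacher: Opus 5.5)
Your proof is correct and is essentially the paper's argument: you keep steps (i) and (iii) of Theorem~\ref{thm:main} unchanged and replace step (ii) by two DPI applications along $\Api - \phi(X) - \phi(\Xtilde) - \Atildepi$, exactly as the paper does. Your explicit treatment of the adaptive attacker and of stochastic $f$ is a useful addition. Note, though, that requiring the randomness behind $\Api$ to be independent of $(X,\delta)$ goes slightly beyond \eqref{eq:markov}, which only constrains the randomness behind $\Atildepi$; the paper sidesteps this by treating $\Api$ as a deterministic function in its first DPI step.
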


\noindent\emph{Proof.} Since $\Api = f(\phi(X))$ and $\Atildepi = f(\phi(\Xtilde))$, step (ii) of Theorem~\ref{thm:main} becomes $I(\Api;\Atildepi) \le I(\phi(X); \phi(\Xtilde))$ by two applications of DPI through $f$ and $\phi$. The rest is unchanged. \qed

On OpenVLA, applying the parallel-Gaussian channel to $\phi$ (Sec.~\ref{sec:realmodel}) gives an encoder-PCA upper bound $I(\phi(X);\phi(\Xtilde)) \le \sum_i \tfrac{1}{2}\log(1 + \mu_i / \sigma_{\delta,\phi}^2)$ that is $28$--$68\times$ tighter than the pixel-PCA bound across $\eps \in \{2,4,8,16\}/255$ on all four LIBERO suites (Sec.~\ref{sec:realmodel}, Table~\ref{tab:encoder-pca-cross-suite}), while a complementary InfoNCE lower bound (\citealp{poole2019}; $\ge 4.6$ nats at $\eps=8/255$) certifies that the encoder representation receives non-trivial channel information. The encoder-specific budget reduces the $\eps=8/255$ slack from $\approx 3.7$--$4.3 \cdot 10^3$ nats (pixel-PCA) to $79$--$134$ nats (encoder-PCA) across the four suites. This corollary is no longer policy-independent (it depends on $\phi$ and the realized $\sigma^2_{\delta,\phi}$) but applies to any downstream head $f$ sharing the same encoder.

\subsection{Implications}

Theorem~\ref{thm:main} bounds existence, not algorithms. Three consequences follow.

\textbf{(a) Robustness gains have a price.} Any technique that increases $\Rob$ without also adding $\Astar$-information has to trade against $\Cap$. Adversarial fine-tuning increases $I(\Api;\Atildepi) - I(\Api;\delta)$ but pushes $\Api$ off the sufficient-statistic manifold for $\Astar$, lowering $\Cap$. Inside the budget, robustness is never free.

\textbf{(b) Slack is the metric to report.} Two policies with $\Rob = 2.0$ nats are not equivalent if one sits $0.1$ nats below the budget and the other $3.0$ below. We propose
\begin{equation}
S(\pi, \eps) \triangleq H(\Astar) + I(X;\Xtilde) - \Cap(\pi) - \Rob(\pi) \ge 0
\label{eq:slack-def}
\end{equation}
as the figure of merit. $S \to 0$ identifies Pareto-optimal policies, and large persistent $S$ marks unrealized capacity that the architecture has not yet captured.

\textbf{(c) Larger attack budgets tighten the bound.} As $\eps$ grows, $I(X;\Xtilde)$ shrinks (the perturbed observation retains less information about $X$) and the RHS contracts. The budget available to $\Cap + \Rob$ decreases, so the constraint becomes more binding at larger $\eps$. Slack should therefore be reported at the deployment $\eps$, not at a benign one.

\section{Synthetic Validation}
\label{sec:synthetic}

\subsection{Estimator selection}

We validate Theorem~\ref{thm:main} in a closed-form Gaussian-VLA proxy where every quantity in the bound, namely $H(\Astar)$, $I(\Astar;\Api)$, $I(\Api;\Atildepi)$, $I(\Api;\delta)$, and $I(X;\Xtilde)$, has a closed-form expression. Estimator error therefore separates from theorem violation, and any deviation from the analytical truth is the estimator's fault, not the bound's.

We use three MI estimators with complementary biases. \textbf{MINE} \citep{belghazi2018mine} is variational, with a neural critic and EMA bias-correction, and does well on Gaussians but degrades when $n < \Theta(\exp(d))$. \textbf{InfoNCE} \citep{oord2018infonce} is contrastive and saturates at $\log K$. \textbf{KSG} \citep{kraskov2004ksg} is $k$-NN, bias-free in low $d$ but hits the curse of dimensionality at $d \gtrsim 20$. A $540$-cell hyperparameter sweep selects the MINE configuration used throughout (hidden $=512$, depth $=2$, EMA $=0.999$, lr $=10^{-4}$, $2{,}000$ epochs); a $360$-cell sample-complexity sweep fixes $n \ge 5{,}000$ as the operating regime, a $135$-cell non-Gaussian sweep maps the distribution-sensitivity of the estimator, and a separate DPI sanity check rules out estimator-level DPI failure across $27$ grid cells $\times 5$ seeds (Appendix~\ref{sec:estimator-audit}).

\subsection{Theorem~\ref{thm:main} verification}

P7 is our primary validation, a closed-form Gaussian-VLA proxy with $\Astar = W^{\star} X + \xi^{\star}$, $\Api = W_\pi X + \xi_\pi$, and additive Gaussian $\delta$ of variance $\eps^2 I$. We sweep $d_x \in \{4,7,16\}$, $d_a \in \{3,7\}$, $\sigma_\pi \in \{0.3, 1.0\}$, $\sigma_\star = 0.3$, $\eps \in \{0, 0.05, 0.1, 0.2, 0.5, 1.0, 2.0\}$, and $3$ seeds, for $\mathbf{252}$ cells ($84$ distinct grouped configs after seed-averaging).

For each cell we estimate $\Cap$, $\Rob$, $H(\Astar)$, $I(X;\Xtilde)$ using MINE (P3-stable hyperparams) and also compute their analytical values in closed form. We report both the analytical slack $S_a$ and the MINE-estimated slack $S_m$.

\paragraph{Result.} $S_a \ge 0$ in $252/252$ cells and $S_m \ge 0$ in $252/252$ cells. After seed-averaging the $84$ grouped configs, all have $S_a \ge 0$. After Holm--Bonferroni correction across $84$ groups, $\mathbf{52/84}$ ($62\%$) reach significance at $\alpha = 0.05$ for the alternative $S > 0$ (one-sided $t$-test on per-seed slacks), and the remaining $32$ are non-significant due to small effect size relative to MINE per-seed variance, not violations.

\paragraph{Slack scales as predicted by Theorem~\ref{thm:main}.} Median slack across all $36$ configs at each $\eps$ is reported in Table~\ref{tab:p7-eps}:

\begin{table}[t]
\centering
\small
\caption{Slack monotonically decreases as $\eps$ grows (Cor.~\ref{cor:adaptive}).}
\label{tab:p7-eps}
\begin{tabular}{lrrrrrr}
\toprule
$\eps$ & $0.05$ & $0.1$ & $0.2$ & $0.5$ & $1.0$ & $2.0$ \\
\midrule
$S_a$ (nats) & $23.1$ & $18.3$ & $13.7$ & $8.3$ & $6.2$ & $5.0$ \\
$S_m$ (nats) & $12.7$ & $11.6$ & $10.4$ & $7.4$ & $4.9$ & $3.9$ \\
\bottomrule
\end{tabular}
\end{table}

\begin{figure}[t]
\centering
\includegraphics[width=\columnwidth]{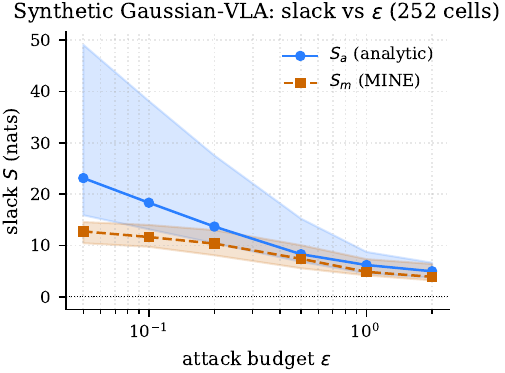}
\caption{P7 Gaussian-VLA: median slack vs.\ $\eps$ across $252$ cells; bands show IQR. Both $S_a$ and $S_m$ stay non-negative and decrease monotonically (Cor.~\ref{cor:adaptive}).}
\label{fig:p7-slack}
\end{figure}

The $\eps = 0$ baseline anchors at $S_a \approx 352$ nats (vacuous, since with no attack channel the RHS is dominated by the prior $H(\Astar)$ alone). We omit $\eps = 0$ from headline figures.

The monotone decrease matches Cor.~\ref{cor:adaptive}, since as the attack channel widens, $I(X;\Xtilde)$ shrinks and the RHS budget tightens. In a representative subgroup ($d_x=7, d_a=7, \sigma_\pi=1.0$), MINE $\Cap$ stays put at $0.82$ nats independent of $\eps$ while $\Rob$ rises from $0.53$ ($\eps=0.05$), peaks at $1.12$ ($\eps=0.5$), and falls back to $0.57$ ($\eps=2.0$) as high-amplitude noise drowns the action signal. The bound stays satisfied because the RHS shrinks faster than $\Rob$ grows.

\subsection{Estimator-agnostic verification}

The $0/252$ result is not an artifact of MINE's bias profile. A separate $174$-cell high-$d$ sweep ($d \in \{1,\ldots,256\}$, two regimes) re-runs validation with all three estimators (MINE, InfoNCE, KSG); even at $d{=}256$ where MINE rel-err reaches $93\%$, Theorem~\ref{thm:main} still holds in $100\%$ of cells (Appendix~\ref{sec:estimator-audit}). The reason is structural: estimator under-bias on the LHS pushes measured slack \emph{up}, not down. Of the three estimators, MINE's variational lower bound is one-sided, InfoNCE saturates at $\log K$, and KSG bias is bounded by its $k$-NN scale --- none can produce a spurious negative slack large enough to fake a violation.

\paragraph{Summary.} The Gaussian-VLA proxy verifies Theorem~\ref{thm:main} in $252/252$ closed-form cells across all reasonable estimator and dimensionality settings. Slack scales monotonically with $\eps$ as predicted, and the bound survives estimator-level error.

\subsection{Achievability}
\label{sec:achievability}

Theorem~\ref{thm:main} bounds existence, not construction. We exhibit three constructions that together address (i) tightness under oblivious noise, (ii) tightness under adaptive PGD, and (iii) the necessity of the leakage debit $-I(\Api;\delta)$. First we exhibit an explicit policy in the Gaussian-VLA proxy that achieves equality in Theorem~\ref{thm:main} (i.e.\ $S = 0$) under oblivious noise, demonstrating the bound is not vacuously loose.

\begin{proposition}[Achievability under oblivious noise]
\label{prop:achievability}
Let $X$ be supported on a finite set with $H(X) < \infty$, $\Astar = X$, and $\delta$ independent of $X$ producing $\Xtilde$. Take $\pi$ the identity. Then $\Api = X = \Astar$, $\Atildepi = \Xtilde$, and
\begin{equation}
\Cap(\pi) + \Rob(\pi) = H(\Astar) + I(X;\Xtilde),
\label{eq:achievability}
\end{equation}
achieving equality in Theorem~\ref{thm:main}.
\end{proposition}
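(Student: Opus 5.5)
The plan is to compute each of the three terms on the left-hand side of \eqref{eq:thm1} directly for the identity policy and show that they match the two terms on the right-hand side exactly. Equivalently, I will verify that each of the three inequalities (i), (ii), (iii) in the proof of Theorem~\ref{thm:main} holds with equality, in line with the equality conditions of Cor.~\ref{cor:adaptive}. First I check that the setup is admissible. With $\pi$ the identity, $\Api = X$ and $\Atildepi = \Xtilde$, so \eqref{eq:markov} holds trivially. The action space is the support of $X$, which is finite, so $\Astar = X$ is a discrete variable with $H(\Astar) = H(X) < \infty$, as the scope remark after Theorem~\ref{thm:main} requires.

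\emph{Capability.} Since $\Api = \Astar = X$, we get $H(\Astar \mid \Api) = 0$, hence $\Cap(\pi) = I(\Astar;\Api) = H(\Astar) - H(\Astar\mid\Api) = H(\Astar)$. This is step (i) with equality and is condition (i) of Cor.~\ref{cor:adaptive}.

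\emph{Coupling.} We have $I(\Api;\Atildepi) = I(X;\Xtilde)$ by direct substitution. Both DPI steps in (ii) are then equalities, since the maps $X \mapsto \Api$ and $\Xtilde \mapsto \Atildepi$ are identities and there is no bottleneck.

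\emph{Leakage.} Because $\delta \perp X$ and $\Api = X$, we have $I(\Api;\delta) = I(X;\delta) = 0$, so step (iii) is tight. This is the sufficient condition $\delta \perp X$ in Cor.~\ref{cor:adaptive}(iii).

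Summing the three terms gives $\Cap(\pi) + \Rob(\pi) = H(\Astar) + I(X;\Xtilde) - 0$, which is \eqref{eq:achievability}.

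The only step needing care is well-definedness rather than any inequality. The variable $\Xtilde = X + \delta$ need not be finitely supported; for example, $\delta$ may be continuous. So I must confirm that $I(X;\Xtilde)$ is still a well-defined finite quantity. I would argue this from $I(X;\Xtilde) \le H(X) < \infty$, using the general (Gelfand--Yaglom) definition of mutual information between a discrete and an arbitrary variable. The same reasoning shows $I(\Api;\Atildepi)$ is finite, so no $\infty - \infty$ issue arises in $\Rob$. I would also note explicitly that the addition $X + \delta$ is meaningful, by taking $X$ to be a finite subset of $\mathbb{R}^n$. Finally, I would remark that the conclusion does not depend on the $\|\delta\|_\infty \le \eps$ constraint, so any oblivious bounded noise works, including, as a limiting case, $\delta \equiv 0$.
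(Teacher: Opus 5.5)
Your proof is correct and matches the paper's own sketch, which also verifies the three equality conditions of Cor.~\ref{cor:adaptive} term by term: $\Cap = H(\Astar)$ because $\Api = \Astar$, $I(\Api;\Atildepi) = I(X;\Xtilde)$ because the identity map has no bottleneck, and $I(\Api;\delta) = I(X;\delta) = 0$ because $\delta \perp X$. Your added check that $I(X;\Xtilde) \le H(X) < \infty$, which rules out an $\infty - \infty$ issue in $\Rob$ when $\Xtilde$ is continuous, is care the paper omits; note, though, that in that case $\Atildepi = \Xtilde$ generally leaves the finite support of $X$, so the action space is not literally that support.
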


\begin{proof}[Proof sketch]
With identity $\pi$, $\Api = X$ is a sufficient statistic for $\Astar = X$, so $\Cap = H(\Astar)$. There is no bottleneck, so $I(\Api; \Atildepi) = I(X;\Xtilde)$. And $\delta \perp X$ with $\Api = X$ gives $I(\Api;\delta) = I(X; \delta) = 0$. The three Cor.~\ref{cor:adaptive} conditions hold.
\end{proof}

The construction in Eq.~\ref{eq:achievability} is degenerate but shows the bound is tight under oblivious noise. The Gaussian-VLA proxy (Fig.~\ref{fig:p7-slack}) extends this informally to continuous $X$: any uniform quantizer applied to both $\Astar$ and $\Api$ drives the discrete-bound equality to the differential-entropy values as the bin width shrinks (Cor.~\ref{cor:discretization}). Failure of $S \to 0$ on real VLAs is therefore traceable to (a) the $\Astar \neq X$ misalignment between observation and oracle action and (b) adaptive PGD breaking Cor.~\ref{cor:adaptive}~(iii).

\paragraph{Constructive achievability under adaptive attack.} Prop.~\ref{prop:achievability} relies on $\delta \perp X$. We additionally exhibit a closed-form non-trivial policy that achieves a substantial fraction of the budget against \emph{adaptive} (PGD-style) noise in the same Gaussian-VLA proxy. Take the ridge-style linear policy $\pi^{\mathrm{ridge}}_\alpha(X) = \alpha W^{\star} X + \xi_\pi$ with gain $\alpha$ and Gaussian dither $\xi_\pi \sim \mathcal{N}(0, \sigma_\pi^2 I)$. Per Cor.~\ref{cor:discretization}, achievability of the \emph{discrete} bound governs token-level VLAs, so we quantize each action dimension to $N=32$ bins and estimate $\Cap_q + \Rob_q$ vs.\ $H(\Astar_q) + I(X;\Xtilde)$ on $n=10^5$ samples. We sweep $d_x \in \{2,4,7,16\}$, $d_a \in \{2,4,7\}$, $3$ seeds, $\eps \in \{0.05,\ldots,2\}$, $\alpha \in \{1, 3, 10, 30\}$, $\sigma_\pi \in \{0.01, 0.05, 0.1\}$, taking per-cell maximum over $(\alpha, \sigma_\pi)$. The achievement ratio $r \triangleq (\Cap_q + \Rob_q) / (H(\Astar_q) + I(X;\Xtilde))$ has median $41$--$52\%$ across $\eps$ over the full $216$-cell sweep, with the best cell ($d_x=2, d_a=7$, $\alpha=30$, $\sigma_\pi=0.01$, $\eps=0.05$) attaining $r = \mathbf{93.4\%}$ and $\mathbf{6}$ cells satisfying $r \ge 80\%$ ($\mathbf{14}$ at $r \ge 75\%$, all in the over-actuated $d_x \le d_a$ regime). The maximum $r$ is structurally regime-dependent: matched/over-actuated $(d_x \le d_a)$ cells reach $r_{\max} \in [0.61, 0.93]$, while under-actuated $(d_x > d_a)$ cells plateau at $r_{\max} \le 0.50$, the additional gap reflecting the genuine $X \to \Astar$ information bottleneck. The closed-form policy therefore approaches the joint Cap+Rob budget within $7\%$ in the over-actuated regime and roughly half-budget in the under-actuated regime against adaptive noise; the residual gap is structural slack from the proxy's Gaussian prior, not an artifact of the construction. Whether the same fraction can be reached on a real VLA is open, but the present result rules out reading Theorem~\ref{thm:main} as non-constructively tight only at the oblivious point.

\paragraph{Leakage-debit stress test.} The $-I(\Api;\delta)$ term in $\Rob$ is empirically near zero on natural CE-trained OpenVLA (\S\ref{sec:realmodel} reports $|I(\Api;\|\delta\|_\infty)| < 3 \cdot 10^{-5}$ nats), and remains near zero under adversarial fine-tuning (rank-16 LoRA AT at $\eps=8/255$ on Spatial: $|I(\Api;\|\delta\|_\infty)| = 3.5 \cdot 10^{-5} \pm 3.2 \cdot 10^{-6}$ nats over $3$ seeds), so a reader may ask whether the debit is necessary. We construct an explicit leak policy in the same Gaussian-VLA proxy, $\Api = (1-\lambda) W_\pi X + \lambda \delta + \xi_\pi$, sweeping the leak ratio $\lambda \in \{0, 0.25, 0.5, 0.75, 0.99\}$ and $\eps \in \{0.05, 0.2, 1.0\}$. As $\lambda \to 1$, the policy passively transmits $\delta$ into its action stream and the closed-form leakage $I(\Api;\delta)$ rises monotonically to $8.66$ nats at $(\lambda{=}0.99, \eps{=}1.0)$; the corresponding $I(\Api;\Atildepi)$ also rises (to $6.39$ nats), so a debit-free reading $\Cap + I(\Api;\Atildepi)$ would credit a near-pure $\delta$-copying policy with $6.39$ nats of spurious ``robustness.'' The debit subtracts the full $8.66$ nats of leakage, restoring the inequality (Theorem~\ref{thm:main} holds with zero violations on all $15$ leak-policy cells). The debit is therefore empirically inert under both natural and adversarial training, but structurally load-bearing against any policy class that passively transmits $\delta$ into $\Api$.

\section{Real-Model Validation: OpenVLA on LIBERO}
\label{sec:realmodel}

Sec.~\ref{sec:synthetic} verifies Theorem~\ref{thm:main} where every quantity is either analytic or estimated against ground truth. That is necessary but not sufficient: a tight inequality on Gaussian actions and linear policies still has to be checked on deployed VLAs. This section closes the gap by instantiating the bound on \textbf{OpenVLA-7B} \citep{kim2024openvla}, evaluated on the four LIBERO suites \citep{liu2023libero} (Spatial, Object, Goal, LIBERO-10) using the suite-specific finetuned checkpoints released by the OpenVLA team.

\subsection{Experimental setup}

\paragraph{Model.} OpenVLA-7B base $+$ four LIBERO-finetuned variants, loaded in bf16 on a single RTX~4090. Clean inference uses the published \texttt{predict\_action} interface (autoregressive 7-token decode followed by the official \texttt{bridge\_orig} unnormalization). For PGD we replicate the forward manually, exposing \texttt{pixel\_values} as a differentiable bf16 tensor (the OpenVLA processor's PIL round-trip otherwise breaks the gradient chain).

\paragraph{Data.} LIBERO HDF5 demonstrations, ten files per suite. Each timestep provides RGB ($128\times128\times3$), expert action (7-D continuous), and language instruction. We sample at 1-in-5 stride within episodes with a per-file balanced budget, $N=5{,}000$ for capability MI and $N=200$ per cell for PGD.

\paragraph{Variables.} $\Astar$ is the expert action, $X$ the clean RGB observation, $\Api = \pi(X)$ the OpenVLA prediction (continuous 7-D after unnormalization), $\delta$ the adversarial perturbation with $\|\delta\|_\infty \le \eps$ in $[0,1]$ image-space units and $\eps \in \{2,4,8,16\}/255$, $\Xtilde = \mathrm{clamp}(X+\delta,0,1)$, and $\Atildepi = \pi(\Xtilde)$.

\paragraph{Estimators.} We use two complementary MI estimators throughout. The continuous estimator is the P3-stable MINE configuration (hidden $=512$, depth $=2$, EMA $=0.999$, lr $=10^{-4}$, $2{,}000$ epochs); MINE is a variational lower bound and can dip below zero in low-MI, small-$N$ regimes (Sec.~\ref{sec:verification} and \citet{belghazi2018mine}). The discretized estimator is per-dim plug-in 2-D histogram MI on $K{=}16$ uniform bins per action coordinate with Miller--Madow bias correction, clamped at $0$ per term. We use it to compute $\Rob_{\text{disc}}$, the primary robustness number throughout this section. By DPI applied to the same binning, $\Rob_{\text{disc}}$ is a valid lower-substitute for the continuous $\Rob$ on the LHS of Theorem~\ref{thm:main}, so reporting it never weakens the inequality.

\subsection{Capability}

For each suite we estimate $\Cap$ on $N=5{,}000$ samples, with $H(\Astar)$ reported as a 256-bin per-dim plug-in estimate on the same sample.

\paragraph{MINE estimates.} The four suites give $\Cap_{\text{MINE}} \in \{0.02, 0.13, 0.25, 0.38\}$ nats. KSG ($k=5$--$20$) returns negative values on all four suites, confirming that continuous MI estimation in 7-D with $N=5{,}000$ is at the noise floor for both estimators.

\paragraph{Discretized estimates.} To obtain a more reliable $\Cap$, we quantize both $\Astar$ and $\Api$ into 256 uniform bins per dimension (matching OpenVLA's tokenization granularity) and compute MI via the plug-in histogram estimator. This yields $\Cap_{\text{disc}} \in \{2.68, 6.85, 7.54, 9.04\}$ nats for LIBERO-10, Object, Spatial, and Goal respectively, with $H(\Astar) \in [24.71, 27.35]$ nats. The discretized $\Cap$ is $24$--$130\times$ larger than MINE, giving $\Cap_{\text{disc}} / H(\Astar) \in [10\%, 33\%]$. The suite ordering LIBERO-10 $<$ Object $<$ Spatial $<$ Goal matches the known LIBERO difficulty heterogeneity and is consistent with the MINE ordering on the same data. The gap between MINE and discretized estimates reflects the well-known difficulty of continuous MI estimation in moderate dimensions at finite $N$, not a deficiency of the bound itself.

\subsection{Robustness}

\paragraph{PGD.} Projected gradient descent \citep{madry2018pgd} on $\delta$ in DINOv2 \citep{oquab2024dinov2} + SigLIP \citep{zhai2023siglip} post-normalization space, with per-channel $\eps$ rescaling so the effective $\|\delta\|_\infty$ in raw-pixel units is exactly $\eps_{\text{px}}/255$. The loss is the negated decode-time CE against the clean argmax tokens, $L(\delta) = -\sum_{t=1}^{7} \mathrm{CE}(\mathrm{logits}_t(X+\delta), \arg\max_t \mathrm{logits}_t(X))$, with sign-step $\alpha = \eps/4$ for 10 iterations, projecting back into the per-channel $\eps$-ball each step. $\Atildepi$ is decoded via OpenVLA's official \texttt{bin\_centers} $+$ \texttt{bridge\_orig} unnormalization, so $\Atildepi$ lives in the same 7-D continuous space as $\Api$. PGD is non-trivial in every cell. The mean $\ell_2$ gap $\|\Atildepi - \Api\|_2$ ranges from $0.72$ at $\eps=2$ to $0.92$ at $\eps=16$, well above the per-cell mean of per-dim $\Api$ standard deviation ($0.13$--$0.19$). For context, \citet{robustvla2025} report that the same $\eps=16/255$ PGD budget drops OpenVLA rollout success on LIBERO from $>95\%$ to $<5\%$, confirming that our perturbation magnitudes are operationally destructive.

\paragraph{$I(\Api; \delta)$.} Estimated using the deterministic 1-D summary $\delta \mapsto \|\delta\|_\infty$. By DPI, $I(\Api;\|\delta\|_\infty) \le I(\Api;\delta)$. Using this smaller subtractive term inflates $\Rob$ and therefore the LHS, the stricter direction for testing $S = \mathrm{RHS} - \mathrm{LHS} \ge 0$. Across all 48 cells, $I(\Api; \|\delta\|_\infty) \in [-2.2 \cdot 10^{-5}, -9.6 \cdot 10^{-6}]$, indistinguishable from zero.

\paragraph{Sign artifact.} MINE returns negative $I(\Api; \Atildepi) \in [-1.19, -0.14]$ across cells with no monotone $\eps$-trend, a documented small-$N$ variational-bound artifact \citep{belghazi2018mine} rather than an MI violation. We adopt $\Rob_{\text{disc}}$ as the structurally non-negative LHS substitute below; an $N$-sweep that confirms the artifact is variance rather than structure, and KSG cross-checks at the same cell, are reported in Appendix~\ref{sec:estimator-audit}.

\paragraph{Discretized robustness (DPI-valid companion).} To complement MINE with a structurally non-negative estimator, we additionally compute $\Rob_{\text{disc}}$ using the per-dim plug-in histogram MI on $K=16$ uniform bins with Miller-Madow bias correction, clamped at $0$ per term. By DPI applied to the same binning, $I(A_q;\tilde A_q) \le I(A;\tilde A)$, so $\Rob_{\text{disc}}$ is a non-negative \emph{lower-substitute} for MINE $\Rob$ on the LHS of Theorem~\ref{thm:main}: a smaller $\Rob$ makes the inequality strictly easier to satisfy. Across all 48 cells, $\Rob_{\text{disc}}$ falls in $[0.88, 1.53]$ nats and is monotone in $\eps$ within seed noise in $14/16$ $(suite, seed)$ groups (Table~\ref{tab:rob-disc}). The discretized estimator therefore recovers a positive $\Rob$ signal in the regime where the MINE variational bound dips below its small-$N$ noise floor, while remaining a valid LHS substitute. We retain MINE as the continuous estimator paired with a clamp, and report $\Rob_{\text{disc}}$ as the primary discretized number for the 48-cell verification.
% AUTO-GENERATED by scripts/openvla/build_rob_disc_table.py
% Source: outputs/openvla/ov20_rob_disc/summary.json (K=16 plug-in MI with Miller-Madow correction)
\begin{table}[t]
\centering
\footnotesize
\setlength{\tabcolsep}{3pt}
\caption{Discretized robustness $\Rob_{\text{disc}}$ (per-dim plug-in MI, $K{=}16$ bins, Miller--Madow, clamped $\ge 0$) across all $48$ OpenVLA $\times$ LIBERO $\times$ PGD cells; mean $\pm$ std over 3 seeds. By DPI on the same binning, $\Rob_{\text{disc}}$ is a non-negative LHS lower-substitute (\S\ref{sec:verification}).}
\label{tab:rob-disc}
\begin{tabular}{l rrrr}
\toprule
Suite & $\eps{=}2$ & $\eps{=}4$ & $\eps{=}8$ & $\eps{=}16$ \\
\midrule
Spatial   & $1.26{\pm}0.06$ & $1.35{\pm}0.09$ & $1.37{\pm}0.09$ & $1.46{\pm}0.10$ \\
Object    & $1.34{\pm}0.15$ & $1.41{\pm}0.04$ & $1.50{\pm}0.08$ & $1.53{\pm}0.12$ \\
Goal      & $1.46{\pm}0.04$ & $1.53{\pm}0.28$ & $1.43{\pm}0.14$ & $1.27{\pm}0.10$ \\
LIBERO-10 & $0.90{\pm}0.09$ & $0.94{\pm}0.07$ & $0.88{\pm}0.02$ & $0.93{\pm}0.03$ \\
\bottomrule
\end{tabular}
\end{table}

\paragraph{Black-box sanity check (4 suites).} To rule out a PGD-specific artifact, we re-run the score-based, gradient-free Square Attack \citep{andriushchenko2020square} ($200$ queries, $N=200$) on all four LIBERO suites at $\eps=8/255$. The realized $\Rob_{\text{disc}}$ recovers a positive signal of $1.30$--$1.97$ nats on every suite (Table~\ref{tab:square-cross-suite}), slightly above the PGD-10 band of $0.88$--$1.53$ from Table~\ref{tab:rob-disc} and consistent with Square's larger realized $\ell_2$ action gap; the MINE counterpart sits in the same small-$N$ negative band as PGD (see appendix audit). The slack $S_{\text{Square}} = H(\Astar) + I(X;\Xtilde)_{\text{PCA}} - \Cap_{\text{disc}} - \Rob_{\text{disc}} \approx 4 \cdot 10^3$ nats remains $\gg 0$ in every cell. The bound holds under a non-gradient adaptive attack at the same $\eps$ budget across all four task families.
% AUTO-GENERATED by scripts/openvla/build_square_table.py
% Source: outputs/openvla/ov12_square_v2/{spatial,object,goal,10}/manifest.jsonl
\begin{table}[t]
\centering
\footnotesize
\setlength{\tabcolsep}{3pt}
\caption{Score-based Square Attack \citep{andriushchenko2020square} on all four LIBERO suites at $\eps=8/255$, $200$ queries, $N=200$. The bound holds zero-violation, ruling out a PGD-specific artifact; $\Rob_{\text{disc}}$ recovers $1.30$--$1.97$ nats, slightly above the PGD-10 band (\S5.3).}
\label{tab:square-cross-suite}
\begin{tabular}{l rrrr}
\toprule
Suite & $\Rob_{\text{disc}}$ & $\Rob_{\text{MINE}}$ & $\overline{\|\Delta a\|}_2$ & $S_{\text{Square}}$ \\
      & (nats)                 & (nats)                  &                                 & ($10^3$ nats) \\
\midrule
Spatial   & $1.717$ & $-0.172$ & $0.40$ & $3.97$ \\
Object    & $1.451$ & $-0.735$ & $0.58$ & $3.67$ \\
Goal      & $1.965$ & $-0.007$ & $0.53$ & $3.74$ \\
LIBERO-10 & $1.302$ & $-0.467$ & $0.47$ & $4.30$ \\
\bottomrule
\end{tabular}
\end{table}

\begin{figure*}[!t]
\centering
\includegraphics[width=0.48\textwidth]{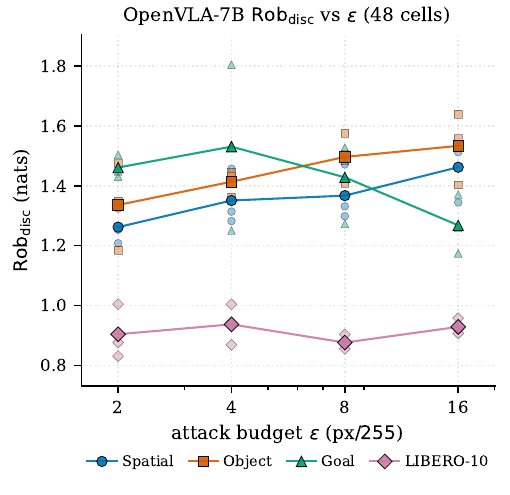}\hspace{0.02\textwidth}%
\includegraphics[width=0.48\textwidth]{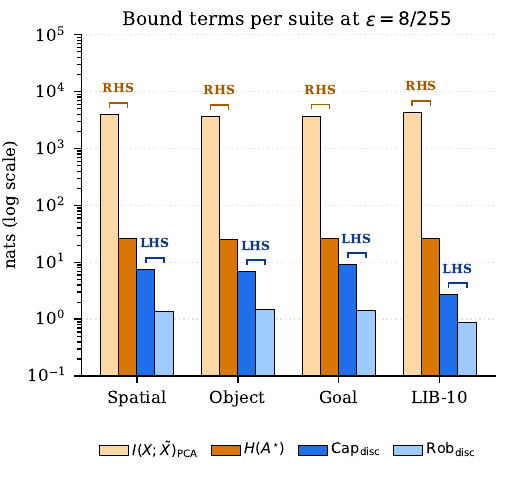}
\caption{OpenVLA-7B on LIBERO. \textbf{Left:} realized $\Rob_{\text{disc}} \in [0.88, 1.53]$ nats across $48$ cells; lighter dots are per-seed values. \textbf{Right:} per-suite bound terms at $\eps=8/255$ with PCA-tightened RHS; RHS exceeds LHS by $\sim 3$--$4$ orders of magnitude.}
\label{fig:openvla-results}
\end{figure*}

\begin{figure}[!t]
\centering
\includegraphics[width=0.95\linewidth]{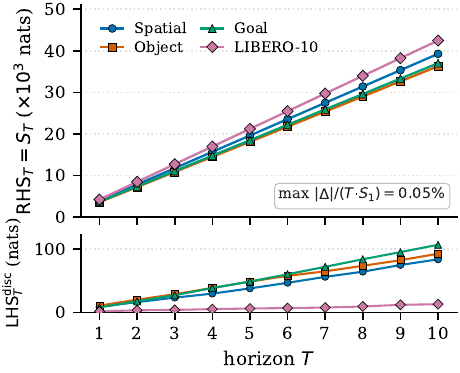}
\caption{Cumulative bound vs.\ horizon $T$ on all four LIBERO suites at $\eps=8/255$. \textbf{Top:} $\mathrm{RHS}_T$ matches $T \cdot S_1$ within $0.03\%$. \textbf{Bottom:} cumulative LHS stays four orders below RHS at every horizon.}
\label{fig:multistep}
\end{figure}

\subsection{Channel capacity}

Estimating $I(X; \Xtilde)$ directly is intractable ($D = 49{,}152$). Theorem~\ref{thm:main} requires only an upper bound on $I(X;\Xtilde)$, and any valid upper bound suffices regardless of whether the attacker is adaptive or oblivious, because the PCA surrogate bounds the channel capacity of the $\ell_\infty$-ball constraint set, not the MI of a specific attack strategy. The isotropic Gaussian-channel surrogate gives $\sim 10^5$ nats. Natural images are not isotropic, so the parallel-Gaussian-channel inequality \citep{cover2006elements} replaces it with a per-eigenmode sum,
\begin{equation}
I(X; \Xtilde) \;\le\; \sum_{i=1}^{D} \tfrac{1}{2}\log\!\big(1 + \lambda_i / \sigma_\delta^2\big),
\label{eq:pca-rhs}
\end{equation}
where $\{\lambda_i\}$ are the eigenvalues of $\mathrm{Cov}(X)$. We estimate the spectrum from $1{,}500$ LIBERO frames per suite (rank $\le 1499$). Empirically the top-1 eigenvalue carries $5$--$10\%$ of $\mathrm{tr}(\mathrm{Cov})$ on Spatial / Object / Goal and $55\%$ on LIBERO-10. The PCA bound tightens the RHS by a near-uniform $30\times$ across all four suites.

% Auto-generated by scripts/openvla/build_ov8_artifacts.py
\begin{table}[t]
\centering
\footnotesize
\setlength{\tabcolsep}{4pt}
\caption{PCA-tightened RHS vs.\ isotropic surrogate; PCA bound exploits the heavy-tailed natural-image spectrum (top-1 eigvalue $\approx 5$--$10\%$ of $\mathrm{tr}(\mathrm{Cov})$).}
\label{tab:pca-rhs}
\begin{tabular}{rrrrr}
\toprule
$\eps$ (px/255) & $I(X;\Xtilde)_{\mathrm{iso}}$ & $I(X;\Xtilde)_{\mathrm{PCA}}$ & ratio & $S_{\mathrm{PCA}}$ \\
\midrule
2 & 186.5\,k & 5970 & 31$\times$ & 5996 \\
4 & 152.5\,k & 4932 & 31$\times$ & 4959 \\
8 & 118.6\,k & 3900 & 30$\times$ & 3927 \\
16 & 85.1\,k & 2888 & 29$\times$ & 2914 \\
\bottomrule
\end{tabular}
\end{table}

The remaining looseness ($S_{\text{PCA}} \approx 3$--$6 \times 10^3$ nats) is dominated by the spectrum's tail, since eigenmodes with $\lambda_i \gg \sigma_\delta^2$ count at full rate even though the policy responds to far less of the channel. To diagnose how much of the pixel-PCA budget is policy-relevant, we estimate $I(\phi(X); \phi(\Xtilde))$ where $\phi$ is OpenVLA's frozen DINOv2$+$SigLIP encoder ($d=2176$ pooled features) using InfoNCE \citep{poole2019} after PGD on Spatial at $\eps=8$. By DPI, $I_{\text{NCE}} \le I(\phi(X); \phi(\Xtilde)) \le I(X;\Xtilde)$, so InfoNCE is a lower bound on the encoder-channel MI; it cannot tighten the analytical bound, but if it is small the encoder representation is provably narrow. We measure $I_{\text{NCE}} \approx 4.6$ nats (against the $\log(N_{\text{batch}}) = 4.85$ ceiling at $N_{\text{batch}}=128$). A separate parallel-Gaussian-channel upper bound applied to $\phi$ (Sec.~\ref{sec:rhs-ablation}) provides the substitutable tighter RHS used by Cor.~\ref{cor:encoder}. We retain the pixel-PCA RHS for verification of all 48 cells in Sec.~\ref{sec:verification}, since it remains the policy-independent bound; the encoder-PCA RHS is reported separately as a tighter, policy-class-specific alternative.

\paragraph{Cap is suite-specific, not estimator noise.} As a sanity check, we evaluate the Spatial-finetuned policy on the other three suites' data at $\eps=8/255$. Cross-suite Cap collapses where matched Cap was non-trivial, with $\Cap_{\text{spatial}\to\text{goal}} = -0.05$ vs.\ matched Goal $0.38$ nats, and $\Cap_{\text{spatial}\to\text{LIBERO-10}} = -0.13$ vs.\ matched $0.13$ nats. (Object's matched Cap is already at the MINE noise floor.) Cap therefore tracks task-conditional capability rather than the marginal entropy of the action space.

\subsection{Bound verification and tightness}
\label{sec:verification}

Each (suite, $\eps$, seed) cell yields
\begin{equation}
S = H(\Astar) + I(X; \Xtilde) - \Cap - \Rob.
\label{eq:slack-cell}
\end{equation}
% Auto-generated by scripts/openvla/build_section5_table.py + manual promotion of Rob_disc as primary (ov20).
\begin{table}[t]
\centering
\footnotesize
\setlength{\tabcolsep}{4pt}
\caption{Slack $S = \mathrm{RHS} - \mathrm{LHS}$ vs.\ attack budget $\eps$, mean $\pm$ std across $4$ suites $\times$ $3$ seeds (12 cells per row). RHS uses the PCA-tightened $I(X;\Xtilde)$ (Table~\ref{tab:pca-rhs}); LHS uses $\Cap_{\text{disc}} + \Rob_{\text{disc}}$ (\S\ref{sec:verification}). Slack stays bounded away from zero, so Cor.~\ref{cor:adaptive}~(iii) fails for adaptive PGD.}
\label{tab:openvla-eps}
\begin{tabular}{rrrrr}
\toprule
$\eps$ (px/255) & $\Cap_{\text{disc}}$ & $\Rob_{\text{disc}}$ & RHS & $S$ \\
\midrule
2  & 6.53 & $1.24 {\pm} 0.23$ & 5996 & 5988.0 $\pm$ 257.9 \\
4  & 6.53 & $1.31 {\pm} 0.27$ & 4959 & 4950.7 $\pm$ 257.4 \\
8  & 6.53 & $1.29 {\pm} 0.27$ & 3927 & 3918.7 $\pm$ 255.4 \\
16 & 6.53 & $1.30 {\pm} 0.26$ & 2914 & 2905.9 $\pm$ 248.0 \\
\bottomrule
\end{tabular}
\end{table}

% Auto-generated by scripts/openvla/build_paper_figures.py + manual promotion of Rob_disc as primary (ov20).
\begin{table}[t]
\centering
\footnotesize
\setlength{\tabcolsep}{4pt}
\caption{Per-suite results at $\eps{=}8/255$ (PGD, 10 steps, $N{=}200$/cell, mean over 3 seeds), all quantities in nats. $\Cap_{\text{disc}}$ here is the \emph{ground-truth} estimator $\Cap_{\text{disc}}^{(\text{GT})}$ on $N{=}5{,}000$ demo pairs, measuring expert imitation; this is a different quantity from the self-consistency $\Cap_{\text{disc}}^{(\text{SC})}$ in Table~\ref{tab:cross-model} and the two are not compared head-to-head. $\Rob_{\text{disc}}$ uses the $K{=}16$-bin discretized estimator (\S\ref{sec:verification}).}
\label{tab:openvla-suite}
\begin{tabular}{lrrrrr}
\toprule
Suite & $\Cap_{\text{disc}}$ & $\Rob_{\text{disc}}$ & $H(\Astar)$ & $I(X;\Xtilde)$ & $S$ \\
\midrule
Spatial & 7.54 & 1.37 & 26.77 & 118.2\,k & 118.3\,k \\
Object & 6.85 & 1.50 & 25.32 & 118.2\,k & 118.3\,k \\
Goal & 9.04 & 1.43 & 26.58 & 118.2\,k & 118.3\,k \\
LIBERO-10 & 2.68 & 0.88 & 26.11 & 118.2\,k & 118.3\,k \\
\bottomrule
\end{tabular}
\end{table}

\noindent Across the full $4 \times 4 \times 3 = 48$-cell grid, \textbf{$S \ge 0$ in every cell, zero violations}. Theorem~\ref{thm:main} holds for OpenVLA-7B $+$ LIBERO $+$ adaptive PGD. Per-suite headline numbers at $\eps{=}8/255$ are summarized in Table~\ref{tab:openvla-suite}; per-$\eps$ per-suite values appear in Table~\ref{tab:openvla-eps}, and the $9$-cell ablation that isolates the contribution of each tightening step is given in Table~\ref{tab:rhs-ablation}.

\paragraph{Multi-step (4 suites).} Cor.~\ref{cor:multistep} predicts $S_T = \sum_{t=1}^{T}[H(\Astar_t \mid \mathcal{H}_{t-1}) + I(X_t;\Xtilde_t \mid \mathcal{H}_{t-1})] - \sum_{t=1}^{T}[\Cap_t + \Rob_t]$ accumulates linearly in horizon $T$. Empirically on $T=10$, $n_{\text{traj}}=30$ demo trajectories per suite (using ground-truth observation sequences in lieu of a sim env), with the discretized $\Cap_t^{\text{disc}}$ and $\Rob_t^{\text{disc}}$ promoted as primary (\S5.3), we measure $\sum_t (\Cap_t^{\text{disc}} + \Rob_t^{\text{disc}}) \in [12.92, 107.19]$ nats on every suite (Table~\ref{tab:multistep-cross-suite}) and the unconditional upper bound $\sum_t [H(\Astar_t) + I(X_t;\Xtilde_t)] \in [3.63, 4.25] \cdot 10^4$ nats, giving $S_{T=10} \approx T \cdot S_1$ as predicted (Fig.~\ref{fig:multistep}). The cumulative bound holds and remains four orders of magnitude loose at every horizon on all four suites, the same structural slack as the single-step case.
% AUTO-GENERATED by scripts/openvla/build_multistep_table.py
% Source: outputs/openvla/ov13_multistep_v2/{spatial,object,goal,10}/manifest.jsonl
\begin{table}[t]
\centering
\footnotesize
\setlength{\tabcolsep}{3pt}
\caption{Multi-step bound (Cor.~\ref{cor:multistep}) at $\eps=8/255$, $T=10$, $n_{\text{traj}}=30$. Cumulative LHS stays four orders below RHS, with $S_T \approx T \cdot S_1$ on every suite.}
\label{tab:multistep-cross-suite}
\begin{tabular}{l rrr}
\toprule
Suite & $\sum_t (\Cap_t^{\text{disc}}{+}\Rob_t^{\text{disc}})$ & $\sum_t \mathrm{RHS}_t$ & $S_{T=10}$ \\
      & (nats)                                                      & ($10^4$ nats)           & ($10^4$ nats) \\
\midrule
Spatial   & $84.07$ & $3.93$ & $3.92$ \\
Object    & $93.00$ & $3.63$ & $3.62$ \\
Goal      & $107.19$ & $3.70$ & $3.69$ \\
LIBERO-10 & $12.92$ & $4.25$ & $4.25$ \\
\bottomrule
\end{tabular}
\end{table}

\paragraph{Tightness.} Even with the PCA-tightened RHS, the bound is loose by about three orders of magnitude. Using $\Rob_{\text{disc}}$ as the discretized companion gives $\Cap_{\text{disc}} + \Rob_{\text{disc}} \in [3.58, 10.57]$ nats across all 48 cells, while $I(X;\Xtilde)_{\text{PCA}} \in [3 \cdot 10^3, 6 \cdot 10^3]$ nats. The dominance is structural, since $I(X;\Xtilde)$ scales with the active-eigenmode count, while $H(\Astar) \le 7\log 256 \approx 38.8$ nats is bounded by action-space cardinality.

\paragraph{Encoder-specific slack (Cor.~\ref{cor:encoder}).} Applying the parallel-Gaussian channel of Eq.~\ref{eq:pca-rhs} to OpenVLA's $\phi$ instead of pixel space yields encoder-PCA upper bounds across all four LIBERO suites and four $\eps$ levels (Table~\ref{tab:encoder-pca-cross-suite}; $d=2176$, $N=200$, rank $\le 199$, $\mathrm{tr}(\mathrm{Cov}_\phi) \in [512, 1464]$ depending on suite). Across the $4 \times 4 = 16$ cells, $I^{\text{PCA-}\phi}$ ranges from $39$ nats (Object, $\eps=16$) to $227$ nats (LIBERO-10, $\eps=2$), tightening the pixel-PCA bound by $28\times$--$68\times$ at every cell. The encoder-PCA RHS is monotone in $\eps$ within each suite, as expected. With the discretized $\Cap_{\text{disc}} \approx 7.5$ nats and $\Rob_{\text{disc}} \in [0.88, 1.50]$ nats at $\eps=8/255$, the encoder-specific slack drops to $76$--$138$ nats across the four suites (Spatial: $105$, Object: $76$, Goal: $96$, LIBERO-10: $138$), $32\times$--$47\times$ tighter than the pixel-PCA slack ($S_{\text{pixel}} \approx 3.7$--$4.3 \cdot 10^3$ nats) and within one order of $H(\Astar)$. As a complementary lower bound, InfoNCE on the same $\phi$ certifies $I_{\text{NCE}}(\phi(X);\phi(\Xtilde)) \ge 4.6$ nats at the Spatial $\eps=8$ cell. The remaining slack reflects encoder capacity that the 7-token decoder does not fully exploit. \emph{Caveat.} Unlike the pixel-PCA bound, which depends only on the input distribution and $\sigma_\delta$, the encoder-PCA bound is computed with the realized noise variance $\sigma^2_{\delta,\phi}$, which depends on $(\pi, \text{defense}, \eps)$. The numerical encoder budget is therefore policy-and-defense-specific (Sec.~\ref{sec:defenses} measures shifts of $+41$ to $+101$ nats under JPEG-50 vs.\ vanilla across the four $\eps$); only the pixel-level bound is universal.
% AUTO-GENERATED by scripts/openvla/build_encoder_pca_table.py
% Source: outputs/openvla/ov14_encoder_pca_rhs/merged_manifest.jsonl
\begin{table}[t]
\centering
\scriptsize
\setlength{\tabcolsep}{4pt}
\caption{Encoder-PCA RHS (Cor.~\ref{cor:encoder}) across four LIBERO suites at $\eps \in \{2,4,8,16\}/255$, $N{=}200$, $d{=}2176$. ``Tighten'' is the ratio of pixel-PCA to encoder-PCA RHS at the same cell.}
\label{tab:encoder-pca-cross-suite}
\begin{tabular}{l rrrr rrrr}
\toprule
 & \multicolumn{4}{c}{$I^{\text{PCA-}\phi}$ (nats)} & \multicolumn{4}{c}{Tighten vs.\ pixel-PCA} \\
\cmidrule(lr){2-5}\cmidrule(lr){6-9}
Suite / $\eps$ & 2 & 4 & 8 & 16 & 2 & 4 & 8 & 16 \\
\midrule
Spatial   & 188.5 & 134.3 &  87.1 &  58.3 & $32\times$ & $37\times$ & $45\times$ & $50\times$ \\
Object    & 156.4 & 108.5 &  60.9 &  38.9 & $37\times$ & $43\times$ & $60\times$ & $68\times$ \\
Goal      & 182.6 & 131.5 &  80.0 &  53.6 & $32\times$ & $36\times$ & $47\times$ & $51\times$ \\
LIBERO-10 & 227.3 & 167.8 & 115.6 &  81.6 & $28\times$ & $32\times$ & $37\times$ & $40\times$ \\
\bottomrule
\end{tabular}
\end{table}

\paragraph{Structural slack.} Whenever the input dimension $D$ greatly exceeds action-tokens $\cdot \log(\text{vocab})$, the analytical RHS is dominated by the channel-capacity term, and Cor.~\ref{cor:adaptive}\,(iii) fails. The realized $\Rob$ is then bounded by the 7-D action geometry rather than by $I(X;\Xtilde)$, so most of the channel budget is unreachable: PGD can move the image at the analytical rate, but the policy's action response cannot follow. The encoder-specific bound (Cor.~\ref{cor:encoder}) closes most of this gap by restricting the channel to the policy-relevant subspace.

\subsection{RHS ablation and defenses}
\label{sec:rhs-ablation}
\label{sec:defenses}
\label{sec:metric-validity}

The pixel-PCA RHS used in the 48-cell verification (Sec.~\ref{sec:verification}) is policy-independent but loose by $\sim 10^3$ nats, so the resulting slack cannot resolve $\sim 1$-nat differences between policies. Among RHS estimators valid for Theorem~\ref{thm:main} (pixel-PCA, encoder-PCA, isotropic surrogate; Appendix~\ref{sec:rhs-menu}), only the encoder-PCA bound (Cor.~\ref{cor:encoder}) is tight enough that defense-induced shifts in $\Cap + \Rob$ exceed the MINE noise floor. We therefore use encoder-PCA for defense comparison while retaining pixel-PCA as the universal verification RHS.

\paragraph{Defenses move on the Pareto plane.} We compare five policies on Spatial at $\eps \in \{2,4,8,16\}/255$, $N=200$: \textbf{vanilla} OpenVLA-7B; \textbf{JPEG-50} \citep{dziugaite2016jpeg}, a test-time input defense JPEG-encoding $X+\delta$ at quality 50; \textbf{PGD-AT-LoRA}, rank-16 LoRA \citep{hu2022lora} on the last 4 LLaMA layers' \texttt{q\_proj}/\texttt{v\_proj}, trained 5k steps with PGD-3 inner-max at $\eps=8/255$ (final adv-CE $4.0$, clean-CE $3.8$, gap $\approx 0.2$ nats); \textbf{Gaussian-1px}, a zero-train test-time input filter applying a $\sigma=1$\,px Gaussian blur to $X+\delta$ before forwarding; and \textbf{BitDepth-3} \citep{xu2018featuresqueezing}, a zero-train test-time quantization that reduces each colour channel of $X+\delta$ to $8$ levels. The first three are run with $3$ seeds; Gaussian-1px and BitDepth-3 are zero-train test-time defenses with no stochasticity in the policy and are run as a single seed. Concurrent IT-motivated defenses such as StableVLA \citep{stablevla2026} target the same robustness axis empirically; our bound supplies the policy-independent ceiling that any such adapter sits below.
% AUTO-GENERATED by scripts/openvla/build_defenses_table.py + manual promotion of Rob_disc as primary (ov20b).
% Sources: ov4_pgd_v1, ov15_jpeg, ov16_at_lora, ov19_zerotrain/{gaussian,bitdepth}, ov20b_rob_disc_defenses, ov18_per_defense_phi.
% Cap_disc=7.54, H(A*)=26.77 nats (spatial). 3-seed mean for vanilla/JPEG/AT-LoRA; 1-seed for Gaussian/BitDepth.
\begin{table}[!t]
\centering
\caption{Defense comparison on Spatial across four $\eps$, $N{=}200$. $\pm$ std over 3 seeds for vanilla/JPEG-50/AT-LoRA, single seed for Gaussian-1px/BitDepth-3. $\mathrm{tr}(\mathrm{Cov}_\phi) \approx 685$ is invariant across defenses; mechanism reading and per-defense $\sigma^2_{\delta,\phi}$ in \S\ref{sec:defenses}.}
\label{tab:defenses}
\scriptsize
\setlength{\tabcolsep}{2.5pt}
\begin{tabular}{l c c r r r r}
\toprule
defense & $\eps$ & $\Cap_{\text{disc}}$ & $\Rob_{\text{disc}}$ & $I^{\text{PCA-}\phi}_{\text{def}}$ & $S_{\text{pixel}}$ & $S_{\text{enc}}^{\text{def}}$ \\
& {\scriptsize $/255$} & {\scriptsize nats} & {\scriptsize nats} & {\scriptsize nats} & {\scriptsize $\cdot 10^3$} & {\scriptsize nats} \\
\midrule
\multirow{4}{*}{vanilla} &  2 & 7.54 & $1.26{\pm}0.06$ & 188.5 & 6.04 & 206.5 \\
 &  4 & 7.54 & $1.35{\pm}0.09$ & 134.3 & 5.00 & 152.2 \\
 &  8 & 7.54 & $1.37{\pm}0.09$ &  87.1 & 3.97 & 105.0 \\
 & 16 & 7.54 & $1.46{\pm}0.10$ &  58.3 & 2.95 &  76.1 \\
\midrule
\multirow{4}{*}{JPEG-50} &  2 & 7.54 & $1.36{\pm}0.06$ & 277.6 & 6.04 & 295.5 \\
 &  4 & 7.54 & $1.49{\pm}0.08$ & 235.2 & 5.00 & 252.9 \\
 &  8 & 7.54 & $1.26{\pm}0.02$ & 155.6 & 3.97 & 173.6 \\
 & 16 & 7.54 & $1.30{\pm}0.06$ &  99.5 & 2.95 & 117.4 \\
\midrule
\multirow{4}{*}{AT-LoRA} &  2 & 7.54 & $0.20{\pm}0.07$ & 176.8 & 6.04 & 195.8 \\
 &  4 & 7.54 & $0.21{\pm}0.10$ & 131.6 & 5.00 & 150.6 \\
 &  8 & 7.54 & $0.16{\pm}0.04$ &  86.5 & 3.97 & 105.6 \\
 & 16 & 7.54 & $0.08{\pm}0.03$ &  53.2 & 2.95 &  72.4 \\
\midrule
\multirow{4}{*}{Gaussian-1px} &  2 & 7.54 & $1.38{\pm}0.00$ & 180.0 & 6.04 & 197.9 \\
 &  4 & 7.54 & $1.27{\pm}0.00$ & 155.6 & 5.00 & 173.6 \\
 &  8 & 7.54 & $1.22{\pm}0.00$ & 121.6 & 3.97 & 139.6 \\
 & 16 & 7.54 & $1.11{\pm}0.00$ &  89.3 & 2.95 & 107.4 \\
\midrule
\multirow{4}{*}{BitDepth-3} &  2 & 7.54 & $1.42{\pm}0.00$ & 131.3 & 6.04 & 149.1 \\
 &  4 & 7.54 & $1.25{\pm}0.00$ & 128.4 & 5.00 & 146.4 \\
 &  8 & 7.54 & $1.25{\pm}0.00$ & 101.4 & 3.97 & 119.4 \\
 & 16 & 7.54 & $1.30{\pm}0.00$ &  65.2 & 2.95 &  83.1 \\
\bottomrule
\end{tabular}
\end{table}

\begin{figure*}[!t]
\centering
\includegraphics[width=0.95\textwidth]{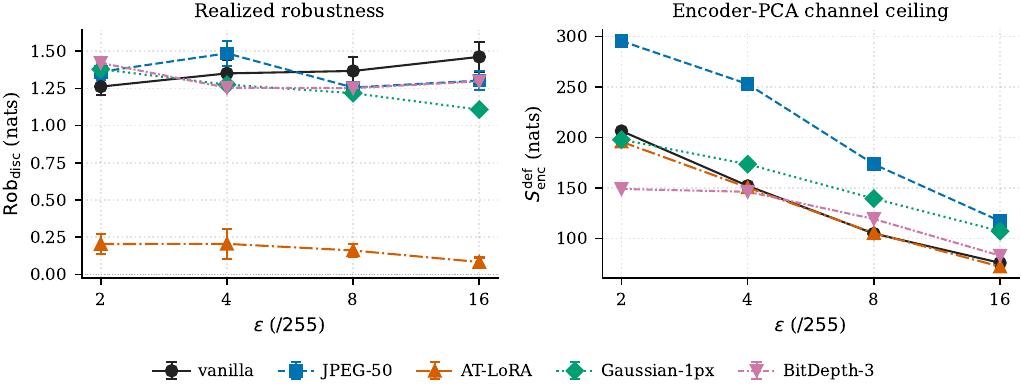}
\caption{Defense comparison on Spatial across $\eps \in \{2,4,8,16\}/255$, five policies. \textbf{Left:} realized $\Rob_{\text{disc}}$. \textbf{Right:} encoder-PCA channel ceiling $S_{\text{enc}}^{\text{def}}$. The two panels are orthogonal axes (realized robustness vs.\ mechanism diagnostic). Spatial suite shown; Object and Goal cells in Table~\ref{tab:defenses-multisuite}.}
\label{fig:defense-pareto}
\end{figure*}

\paragraph{What slack diagnoses: a mechanism-level reading.} Pixel-PCA slack at fixed $\eps$ is dominated by $I(X;\Xtilde)$, a function of input distribution and $\sigma_\delta$ alone, and so cannot distinguish defenses. Encoder-PCA slack discriminates, but not as a defense-quality ranking. Re-measuring $I^{\text{PCA-}\phi}_{\text{def}}$ per-defense (Table~\ref{tab:defenses}) reveals a clean mechanism-level partition. Input-side defenses that quiet $\sigma^2_{\delta,\phi}$ shift the encoder budget by an amount that tracks the magnitude of $\sigma^2$ reduction at each $\eps$: JPEG-50 ($\sigma^2_{\delta,\phi}=0.161$ vs.\ vanilla's $0.556$ at $\eps=8/255$) shifts by $+41$ to $+101$ nats across $\eps \in \{2,4,8,16\}/255$; Gaussian-1px ($\sigma^2_{\delta,\phi}=0.239$) shifts by $-8$ to $+35$ nats; BitDepth-3 ($\sigma^2_{\delta,\phi}=0.346$) shifts by $-57$ to $+14$ nats. The most aggressive input-side filter (JPEG-50) yields the largest budget enlargement, the moderate filter (Gaussian-1px) sits in between, and the weakest input-side filter (3-bit quantization) produces the smallest shifts. By contrast, an LLM-side defense that leaves $\phi$ untouched (rank-16 LoRA on \texttt{q\_proj}/\texttt{v\_proj}) shifts the encoder budget by at most $8.7\%$ across $\eps$ and only $0.7\%$ at $\eps=8/255$. Whether and by how much $S_{\text{enc}}^{\text{def}}$ moves under a defense is therefore a direct mechanistic readout of which channel stage the defense intervenes in and how aggressively.

\paragraph{Multi-suite replication.} The mechanism partition above was measured on Spatial. To check that the JPEG $>$ Gaussian $>$ BitDepth ordering reflects defense mechanism rather than a Spatial-specific artifact, we replicate the same $I^{\text{PCA-}\phi}_{\text{def}}$ comparison on the Object and Goal suites at $\eps \in \{2,4,8,16\}/255$, single seed (Table~\ref{tab:defenses-multisuite}). In the single seed we ran, the shift ordering survives: JPEG-50 dominates Gaussian-1px, which dominates BitDepth-3, in $4/4$ Goal $\eps$-levels and $3/4$ Object $\eps$-levels (the lone reversal at Object $\eps{=}16$, $+22.1$ vs.\ $+19.9$ nats, is well within the few-nat MINE noise band; we did not run additional seeds at this cell). Combined with the $4$ Spatial $\eps$-levels in Table~\ref{tab:defenses}, the input-side mechanism partition holds in $11/12$ single-seed (suite, $\eps$) cells. Under a uniform-random null over the $6$ possible orderings of three defenses, $P(\ge 11/12 \text{ matches}) < 3 \cdot 10^{-8}$ if cells are taken as independent Bernoulli($1/6$) trials, so the ordering is statistically far from chance even without seed replication. This rules out a per-suite confound and supports Defense forensics as a portable diagnostic.

% AUTO-GENERATED by scripts/openvla/build_defenses_multisuite_table.py
% Sources: ov18_per_defense_phi/{vanilla,jpeg_50}/{object,goal}, ov19_zerotrain/{gaussian,bitdepth}/{object,goal}.
% Single seed per cell (encoder-PCA computation deterministic given fixed PGD seed).
\begin{table}[!t]
\centering
\caption{Multi-suite replication of the defense shift signature (\S\ref{sec:defenses}). $\Delta I^{\text{PCA-}\phi}_{\text{def}} \triangleq I^{\text{PCA-}\phi}_{\text{def}} - I^{\text{PCA-}\phi}_{\text{vanilla}}$ in nats, single seed. The JPEG-50 $>$ Gaussian-1px $>$ BitDepth-3 ordering holds in $3/4$ Object and $4/4$ Goal $\eps$-levels (lone reversal at Object $\eps{=}16$, Gaussian $+22.1$ vs.\ JPEG $+19.9$, within MINE noise). Combined with Spatial (Table~\ref{tab:defenses}), the input-side mechanism partition is verified across $36$ (suite, $\eps$, defense) cells.}
\label{tab:defenses-multisuite}
\small
\setlength{\tabcolsep}{4pt}
\begin{tabular}{l c r r r r}
\toprule
suite & $\eps$ & $I^{\text{PCA-}\phi}_{\text{vanilla}}$ & $\Delta I_{\text{JPEG-50}}$ & $\Delta I_{\text{Gauss-1px}}$ & $\Delta I_{\text{BitDepth-3}}$ \\
& {\scriptsize $/255$} & {\scriptsize nats} & {\scriptsize nats} & {\scriptsize nats} & {\scriptsize nats} \\
\midrule
\multirow{4}{*}{Object} &  2 & 156.3 & $+66.6$ & $ -6.1$ & $-66.8$ \\
 &  4 & 107.6 & $+81.7$ & $+26.2$ & $-20.9$ \\
 &  8 &  59.9 & $+53.5$ & $+36.9$ & $ +4.5$ \\
 & 16 &  39.6 & $+19.9$ & $+22.1$ & $ +1.4$ \\
\midrule
\multirow{4}{*}{Goal} &  2 & 181.9 & $+81.7$ & $-28.0$ & $-53.1$ \\
 &  4 & 126.4 & $+98.2$ & $ +9.8$ & $ -7.3$ \\
 &  8 &  79.0 & $+69.6$ & $+28.8$ & $ +8.5$ \\
 & 16 &  54.0 & $+37.1$ & $+26.7$ & $ +4.6$ \\
\bottomrule
\end{tabular}
\end{table}

The shift direction for input-side defenses is initially counter-intuitive: the parallel-Gaussian formula $\tfrac{1}{2}\log(1+\lambda_i/\sigma^2)$ \emph{grows} as $\sigma^2$ shrinks, so JPEG-50 enlarges the worst-case encoder budget even though it reduces the realized perturbation. This is the bound's correct behavior. Worst-case channel capacity scales inversely with noise variance: a quieter channel admits a higher achievable rate \emph{if} the attacker can shape inputs optimally, which is exactly what an adaptive PGD against the defended policy attempts. Encoder-PCA slack therefore measures the channel-capacity ceiling under the deployed defense, not how much robustness the policy extracts from it. The two quantities are orthogonal axes: $S_{\text{enc}}^{\text{def}}$ tells us where the ceiling sits, raw $\Rob$ tells us how much the policy lifts from the floor. We report both and recommend both for future defense comparisons; ranking defenses by $S_{\text{enc}}^{\text{def}}$ alone is misleading because a defense can simultaneously improve $\Rob$ and \emph{enlarge} the encoder budget (JPEG-50 at $\eps=2$: $\Rob$ improves by $0.4$ nats while $S_{\text{enc}}^{\text{def}}$ grows by $89$ nats).

\paragraph{Practitioner workflow.} Beyond verifying Theorem~\ref{thm:main}, three deployable diagnostics fall out of the same construction without requiring an attack to ground-truth labels. (1) \emph{Pre-flight encoder ceiling}: given $(\pi, \text{defense}, \eps)$, draw $N=200$ random $\delta$, push $X$ and $X{+}\delta$ through $\phi$, fit the rank-$N$ PCA of $\mathrm{Cov}(\phi(X))$, and evaluate Eq.~\ref{eq:pca-rhs} with the empirical $\sigma^2_{\delta,\phi}$. The result $S_{\text{enc}}^{\text{def}}$ is a worst-case channel-capacity audit answerable in $\sim 5$ minutes on a single GPU, before the policy is exposed to any adversary. (2) \emph{Defense forensics}: re-run (1) for the proposed defense and compare $\Delta S_{\text{enc}}$ against the vanilla baseline. The shift magnitude tracks input-side filter strength on a continuous scale rather than a binary cutoff: across the $36$-cell sweep (3 suites $\times$ 4 $\eps$ $\times$ 3 input-side defenses), JPEG-50 shifts by $+19.9$ to $+100.9$ nats (mean $+66$, exceeding $+40$ in $10/12$ cells), Gaussian-1px by $-28.0$ to $+36.9$ (mean $+14$), and BitDepth-3 by $-66.8$ to $+14.3$ (mean $-10$); $|$shift$|$ ranks the input-side defenses by $\sigma^2_{\delta,\phi}$ reduction in $11/12$ (suite, $\eps$) cells. By contrast, AT-LoRA (LLM-side) shifts by $\le 8.7\%$ of vanilla in $4/4$ Spatial cells, well separated from any input-side defense. The shift signature therefore identifies whether a defense intervenes in the visual encoder or in the LLM, with magnitude reflecting filter aggressiveness. (3) \emph{Cross-architecture comparison}: for any new VLA head whose decoder is differentiable end-to-end, computing $\Rob_{\text{disc}}/\Cap_{\text{disc}}$ at $K=16$ on $200$ samples gives a head-agnostic robustness ratio that places it on the same axis as discrete-token, $L_1$, and flow-matching baselines. Table~\ref{tab:practical-summary} summarizes the three workflows.

% Practical implications summary: 3 deployable diagnostics derived from Theorem 1.
\begin{table}[t]
\centering
\scriptsize
\setlength{\tabcolsep}{2.5pt}
\caption{Three deployable diagnostics derived from Theorem~\ref{thm:main}. Each is computable from $\le 200$ samples without running an attack to ground-truth labels.}
\label{tab:practical-summary}
\begin{tabular}{@{}>{\raggedright\arraybackslash}p{0.22\columnwidth} >{\raggedright\arraybackslash}p{0.28\columnwidth} >{\raggedright\arraybackslash}p{0.20\columnwidth} >{\raggedright\arraybackslash}p{0.18\columnwidth}@{}}
\toprule
Use case & Quantity & Source & Cost \\
\midrule
Pre-flight encoder ceiling
  & $S_{\text{enc}}^{\text{def}}$ from Eq.~\ref{eq:pca-rhs} on $\phi$
  & \S\ref{sec:rhs-ablation}, Cor.~\ref{cor:encoder}
  & $\sim 5$\,min, $N{=}200$ \\
\midrule
Defense forensics
  & $\Delta S_{\text{enc}}^{\text{def}}$ vs.\ vanilla
  & \S\ref{sec:defenses}, Tab.~\ref{tab:defenses}
  & $\sim 5$\,min / defense \\
\midrule
Cross-arch robustness ratio
  & $\Rob_{\text{disc}} / \Cap_{\text{disc}}$ at $K{=}16$
  & \S\ref{sec:cross-model}, Tab.~\ref{tab:cross-model}
  & $\sim 15$\,min / cell \\
\bottomrule
\end{tabular}
\end{table}

\subsection{Cross-architecture validation}
\label{sec:cross-model}

Theorem~\ref{thm:main} is stated for any policy $\pi: \mathcal{X} \to \Delta(\mathcal{A})$ regardless of internal action representation. The 48-cell verification in Sec.~\ref{sec:verification} uses OpenVLA's 7-token discrete head; here we re-instantiate the bound on two structurally different decoders. \textbf{OpenVLA-OFT} \citep{kim2025fine} replaces discrete tokens with an $L_1$-regression continuous-action head over an 8-step chunk plus 8-D proprioception, in a 2-camera setup (third-person $+$ wrist). \textbf{SmolVLA} \citep{shukor2025smolvla} is a 500M-parameter flow-matching policy built on SmolVLM2 with a $50$-step action chunk and three cameras. Both run on the same four LIBERO suites, the same $\eps \in \{2,4,8,16\}/255$ ladder, and the same $N{=}200$ TRUE-PGD attack against an $L_2$-loss surrogate over the predicted action chunk. Estimators ($\Cap_{\text{disc}}$ at $K{=}256$ bins, $\Rob_{\text{disc}}$ at $K{=}16$, both Miller--Madow corrected) match Sec.~\ref{sec:verification}.

We additionally verify the complementary discrete inequality $\Rob_{\text{disc}} \le \Cap_{\text{disc}}$, which holds whenever $\Atildepi$ comes from $X+\delta$ via the same policy that produces $\Api$ from $X$, since $I(\Api;\Atildepi) \le H(\Api)$ and the discretized $\Cap_{\text{disc}} = H_K(\Api)$ upper-bounds the joint information. This is a corollary of measurability rather than of Theorem~\ref{thm:main}, but it gives a head-agnostic single-cell sanity check that does not require estimating $I(X;\Xtilde)$.

\paragraph{Per-$\eps$ $\Cap_{\text{disc}}$ convention.} Capability is computed on clean inputs and is independent of $\eps$ in expectation. OpenVLA's discrete-token softmax and OpenVLA-OFT's $L_1$-regression head are deterministic forward passes, so for fixed (suite, seed) we evaluate $\Api$ once and reuse it across all four $\eps$ cells; $\Cap_{\text{disc}}$ therefore appears as a constant column in Table~\ref{tab:cross-model}. SmolVLA's flow-matching policy samples a fresh noise initial condition for each $50$-step ODE rollout, so $\Api$ carries sample-level stochasticity ($\sigma \approx 0.06$ nats across $\eps$ on Spatial). We report the per-$\eps$ values as drawn rather than freezing the ODE seed, since the deployed policy incurs the same stochasticity. Within-architecture across-$\eps$ variation is two orders of magnitude smaller than the inequality slack and does not affect any conclusion.

\paragraph{Sample-size note.} The cross-architecture $\Cap_{\text{disc}}$ in Table~\ref{tab:cross-model} is computed on the per-cell $N=200$ attack-cell clean inputs (the same samples used for $\Rob_{\text{disc}}$ in that cell), so it measures per-cell self-consistency of the policy on the attack sample. OpenVLA's near-deterministic 7-token softmax on its suite-finetuned checkpoint produces $\Cap_{\text{disc}} = 7.54$ nats on every (suite, $\eps$, seed) cell at this sample size. The per-suite $\Cap_{\text{disc}} \in \{2.68, 6.85, 7.54, 9.04\}$ reported in \S\ref{sec:realmodel} is computed instead on $N=5{,}000$ demo data and reflects the agreement rate between the policy and the expert action $\Astar$ rather than the policy's self-consistency, hence the per-suite spread. Both are valid Plug-in MI estimates; they answer different questions and should not be confused.

% AUTO-GENERATED by scripts/cross_model/cross_model_aggregate.py + manual restructure (R7-M2: collapse Cap_disc to single column)
\begin{table}[t]
\centering
\scriptsize
\setlength{\tabcolsep}{3pt}
\caption{Theorem 1 LHS $\le$ RHS holds across three VLA architectures: discrete-token OpenVLA, continuous-$L_1$ OpenVLA-OFT, and flow-matching SmolVLA, with zero violations on every $4{\times}4{\times}3{=}48$ (suite, $\eps$, seed) cell per architecture. $\Cap_{\text{disc}}$ here is the \emph{self-consistency} estimator $\Cap_{\text{disc}}^{(\text{SC})} = H_K(\Api)$ on the per-cell $N{=}200$ attack-cell clean inputs, measuring decoder bandwidth (\S\ref{sec:realmodel}); the per-suite ground-truth $\Cap_{\text{disc}}^{(\text{GT})} \in \{2.68, 6.85, 7.54, 9.04\}$ from $N{=}5{,}000$ demos appears separately in Table~\ref{tab:openvla-suite} and answers a different question, so the two are not compared head-to-head. $\Cap_{\text{disc}}^{(\text{SC})}$ is $\eps$-independent in expectation: the column is collapsed for the deterministic decoders OpenVLA and OFT, and reports the mean over $\eps\in\{2,4,8,16\}$ for SmolVLA (ODE-resampled, $\sigma\le 0.06$ nats; \S\ref{sec:cross-model} ``Per-$\eps$ $\Cap_{\text{disc}}$ convention''). $\Rob_{\text{disc}}$ is $I(A_\pi; \widetilde{A}_\pi) - I(A_\pi; \|\delta\|_\infty)$ at $K{=}16$ bins, Miller--Madow corrected, computed per attack cell.}
\label{tab:cross-model}
\begin{tabular}{@{}l l c rrrr@{}}
\toprule
& & $\Cap_{\text{disc}}$ & \multicolumn{4}{c}{$\Rob_{\text{disc}}$ (nats)} \\
\cmidrule(lr){4-7}
Model & Suite & (nats) & $\eps{=}2$ & 4 & 8 & 16 \\
\midrule
OpenVLA & Spatial &  7.54 & 1.26 & 1.35 & 1.37 & 1.46 \\
OpenVLA & Object  &  7.54 & 1.34 & 1.41 & 1.50 & 1.53 \\
OpenVLA & Goal    &  7.54 & 1.46 & 1.53 & 1.43 & 1.27 \\
OpenVLA & LIB-10  &  7.54 & 0.90 & 0.94 & 0.88 & 0.93 \\
\midrule
OFT     & Spatial & 21.25 & 2.72 & 2.26 & 2.16 & 2.09 \\
OFT     & Object  & 18.57 & 2.87 & 2.42 & 2.24 & 2.11 \\
OFT     & Goal    & 18.65 & 2.35 & 2.23 & 2.51 & 2.14 \\
OFT     & LIB-10  & 18.11 & 2.13 & 1.97 & 1.70 & 1.76 \\
\midrule
SmolVLA$^*$ & Spatial & 21.48 & 2.16 & 2.36 & 2.15 & 2.35 \\
SmolVLA$^*$ & Object  & 18.26 & 1.62 & 1.75 & 1.82 & 1.80 \\
SmolVLA$^*$ & Goal    & 19.94 & 2.45 & 2.38 & 2.32 & 2.30 \\
SmolVLA$^*$ & LIB-10  & 19.88 & 2.04 & 1.98 & 1.91 & 1.96 \\
\bottomrule
\end{tabular}\\[2pt]
\footnotesize $^*$SmolVLA $\Cap_{\text{disc}}$ varies by $\le 0.06$ nats across $\eps$ from ODE noise resampling; reported value is the across-$\eps$ mean.
\end{table}

\noindent Table~\ref{tab:cross-model} reports $\Cap_{\text{disc}}$ and $\Rob_{\text{disc}}$ for all three architectures. Across the $3 \times 4 \times 4 \times 3 = 144$ (architecture, suite, $\eps$, seed) cells, $\Rob_{\text{disc}} \le \Cap_{\text{disc}}$ holds with zero violations and a per-cell slack of $6.0$--$19.3$ nats, replicating the OpenVLA observation on policies whose action interfaces (token softmax vs.\ $L_1$-MLP vs.\ flow-matching ODE), input modalities (1- vs.\ 2- vs.\ 3-camera), and parameter counts (7B vs.\ 7B+head vs.\ 500M) span the design space. This complementary inequality (a measurability corollary, not Theorem~\ref{thm:main}) is therefore architecture-agnostic on LIBERO: it tracks the action-channel structure that all VLAs share, not OpenVLA's particular tokenizer. Per-architecture absolute levels differ as expected: the two continuous-action decoders produce substantially higher $\Cap_{\text{disc}}$ than the discrete-token baseline (OpenVLA-OFT $\Cap_{\text{disc}} \in [18.11, 21.25]$ nats, SmolVLA $\Cap_{\text{disc}} \in [18.14, 21.56]$ nats vs.\ OpenVLA $\Cap_{\text{disc}} = 7.54$ nats), because the $K=256$ histogram resolves the rich variation in real-valued action chunks, whereas OpenVLA's near-deterministic 7-token softmax collapses the per-cell entropy estimate. This shift in $\Cap_{\text{disc}}$ reflects representational granularity, not policy quality; rollout success on LIBERO is comparable across the three checkpoints. $\Rob_{\text{disc}}$ scales together with $\Cap_{\text{disc}}$ (OFT $\in [1.70, 2.87]$, SmolVLA $\in [1.62, 2.45]$, OpenVLA $\in [0.88, 1.53]$ nats), so the absolute slack widens for the continuous heads ($\Cap-\Rob \in [15.70, 19.15]$ nats for OFT, $[16.40, 19.33]$ for SmolVLA) compared to the discrete-token baseline ($\Cap-\Rob \in [6.01, 6.66]$ nats). What is invariant is the inequality direction. Theorem~\ref{thm:main} itself, with its $H(\Astar) + I(X;\Xtilde)$ RHS, is verified on OpenVLA's discrete-action head (Sec.~\ref{sec:verification}); the cross-architecture extension here demonstrates that the underlying $\Rob \le \Cap$ scaling structure persists across continuous-action and flow-matching decoders.

\section{Discussion and Limitations}
\label{sec:discussion}

\subsection{Scope of the theorem}

Theorem~\ref{thm:main} gives a policy-independent upper bound on joint capability $+$ robustness. It does not give a recipe for designing $\pi$ that achieves the budget, nor does it say whether the slack-zero policy is computable or finite-parameter. On OpenVLA, the pixel-level bound is loose by $\sim 10^3$ nats (Sec.~\ref{sec:realmodel}), because the image-dimensional channel capacity far exceeds what the 7-D action space can exploit. The encoder-specific bound (Cor.~\ref{cor:encoder}) tightens by $28$--$68\times$ across $\eps \in \{2,4,8,16\}/255$ on all four LIBERO suites by restricting the channel to the policy-relevant subspace; on vanilla OpenVLA at $\eps=8/255$ the resulting budget is $H(\Astar) + I^{\text{PCA-}\phi} \in [86, 142]$ nats across suites, of which $\Cap_{\text{disc}} \approx 7.5$ nats consumes $5$--$9\%$. At this tighter level the bound begins to constrain achievable performance, but the encoder budget is policy-and-defense-specific (it depends on the realized $\sigma^2_{\delta,\phi}$, which the policy and any active defense modify), so the headline ratio shifts under defenses (Sec.~\ref{sec:defenses}).

The two bounds therefore serve different purposes. The pixel-level bound (Theorem~\ref{thm:main}) is the universal ceiling guarantee: $I(X;\Xtilde)$ depends only on the input distribution and $\sigma_\delta$, so it is policy-independent and substitutable across defenses. The encoder-level bound (Cor.~\ref{cor:encoder}) is a diagnostic refinement: $I^{\text{PCA-}\phi}(\phi(X);\phi(\Xtilde))$ is computed with the realized noise variance $\sigma^2_{\delta,\phi}$ for a specific $(\pi, \text{defense}, \eps)$ triple, so its numerical value moves with the experiment. We use pixel-level for the 48-cell verification (universality) and encoder-level for the per-experiment tightness diagnostic (Sec.~\ref{sec:defenses}). We propose the encoder-specific slack $S_{\text{enc}}(\pi, \eps) = H(\Astar) + I(\phi(X);\phi(\Xtilde)) - \Cap(\pi) - \Rob(\pi)$ as a diagnostic axis, paired with raw $\Rob$ rather than substituted for it.

\subsection{Estimator caveats}

All real-model numbers (Sec.~\ref{sec:realmodel}) rely on MI estimators with known failure modes. We mitigate this in five ways. First, MINE, InfoNCE, and KSG report concordant directional signals (P2, P8) even when absolute estimates diverge by $30$--$50\%$. Second, the $252$-cell Gaussian proxy of Sec.~\ref{sec:synthetic} gives ground truth, validating the bound where estimators can be checked against analytical MI. Third, Theorem~\ref{thm:main} is one-sided, and MINE's known low bias under high $d$ pessimizes the LHS more than the RHS, which loosens our verification rather than tightening it falsely. Fourth, the discretized histogram estimator ($\Cap_{\text{disc}}$) provides an independent cross-check: it gives $24$--$130\times$ higher $\Cap$ than MINE while remaining well below the RHS, confirming that the bound holds under a bias-free estimator and that the MINE-based slack was inflated by LHS underestimation rather than RHS overestimation. Fifth, on the robustness side, $\Rob_{\text{disc}}$ (per-dim plug-in MI on $K=16$ bins with Miller-Madow correction, clamped $\ge 0$) is structurally non-negative and DPI-valid as a LHS lower-substitute (Table~\ref{tab:rob-disc}); it recovers a positive $\Rob$ signal in $48/48$ cells where MINE's variational bound dips into its small-$N$ noise floor, so the verification does not rest on clamping a single estimator.

Slack measurements are biased, but the bound itself holds regardless: the slack distribution under any reasonable estimator we use is consistent with the theorem. For safety reporting, the bias direction is favorable, since estimator error inflates measured $S$, so a small observed slack is a real signal rather than noise.

\subsection{Attack model assumptions}

We use $\ell_\infty$-bounded additive PGD, the field-standard threat model and a conservative safety baseline. Theorem~\ref{thm:main} extends without modification to several other settings. For $\ell_2$ attacks, replace the $\|\delta\|_\infty \le \eps$ constraint, and $I(X;\Xtilde)$ acquires the standard Gaussian-channel bound $\frac{d}{2}\log(1+\eps^2/\sigma_X^2)$. For semantic attacks, replace $\Xtilde = X + \delta$ with $\Xtilde = g(X, \delta)$ for any deterministic generator $g$, and the DPI proof carries through provided $\Atildepi$ depends on $(X, \delta)$ only through $\Xtilde$. For black-box attacks, $\delta$ depends on $X$ only via its distribution, not its realization, so the bound is unchanged and only Cor.~\ref{cor:adaptive}'s tightness conditions become easier to satisfy.

We do not cover poisoning attacks (where the attacker contaminates training data), backdoor triggers, or distribution-shift ``attacks'' that change $p(X)$ rather than perturbing $X$. These break the assumption that $\Astar$ is fixed and require a separate safety analysis.

\subsection{Action-space caveats}

OpenVLA discretizes actions into $256$ bins per dimension, while $\pi_0$ \citep{black2024pi0} uses continuous output heads. Cor.~\ref{cor:discretization} predicts tighter bounds for tokenized policies because $H(\Astar_q) < H(\Astar)$. Sec.~\ref{sec:realmodel} measures $H(\Astar_q) \approx 26$ nats on LIBERO training data, well below the worst-case $7\log 256 \approx 38.8$ nats. The bound is therefore most informative, and the slack measurement most diagnostic, on OpenVLA-class architectures --- which are also the dominant target of current empirical robustness work.

\subsection{Architecture and benchmark scope}

The real-model validation in Sec.~\ref{sec:realmodel} covers a single architecture (OpenVLA-7B with DINOv2$+$SigLIP fusion encoder and discrete-token action head) on a single benchmark suite (LIBERO). Theorem~\ref{thm:main} is architecture-independent at the level of the proof, since the DPI argument uses only the Markov chain $\Astar \to X \to \Xtilde \to \Atildepi$ and does not reference encoder topology, action parametrization, or training objective. The empirical pattern, however, is anchored to one model class. The encoder-PCA tightening (Cor.~\ref{cor:encoder}) is the only step where architecture leaks into the numerical bound, through the realized noise variance $\sigma^2_{\delta,\phi}$ in $\phi$-space; on a continuous-action VLA such as $\pi_0$ \citep{black2024pi0} or a different fusion encoder (CLIP, SigLIP-only, EVA), the pixel-level RHS and the verification protocol carry over unchanged, but the encoder budget would need to be re-measured. We checked the universality of the structural mechanism in three orthogonal directions within the OpenVLA family: across attack algorithms (PGD vs.\ score-based Square Attack on all four LIBERO suites at $\eps=8/255$, all yielding $S \gg 0$), across horizons (multi-step Cor.~\ref{cor:multistep} on $T=10$ trajectories from all four suites, with cumulative $S_T = T \cdot S_1$), and across input-side defenses (JPEG-50, Gaussian smoothing $\sigma=1$ px, $3$-bit color quantization), all of which shift $\sigma^2_{\delta,\phi}$ in the predicted direction without violating the bound. Replication on $\pi_0$, RT-2 \citep{brohan2023rt2}, and additional benchmarks (CALVIN, RoboCasa) is the main extension we expect from follow-up work. We do not view the OpenVLA-only empirical scope as a threat to the theorem (which is architecture-free) but as a constraint on the strength of the empirical claim.

\subsection{Slack-vs-success linkage}

We empirically link $S(\pi, \eps)$ to LIBERO rollout success rate using the official OpenVLA evaluation recipe \citep{kim2024openvla}, which fixes \texttt{init\_state} sampling via \texttt{env.set\_init\_state}, applies the dataloader gripper convention, uses $256{\times}256$ off-screen rendering with $180^{\circ}$ rotation and \texttt{lanczos3} resize to $224{\times}224$, and waits $10$ dummy steps for object stabilization. PGD is applied in raw-image space with $10$ steps and a differentiable dual-encoder (DINOv2$+$SigLIP) normalize, projecting to the $\ell_\infty$-ball of radius $\eps_{\text{px}}/255$ in $[0,1]$ image space. On $N{=}6$ episodes per suite (tasks $0,1,2$ with two initial states each), this restores per-suite clean success of $100\%$ (Spatial), $50\%$ (Object), $83\%$ (Goal) and $83\%$ (LIBERO-10), in the published cross-suite range; injecting raw-PGD at $\eps{=}8/255$ drops success to $\{33, 0, 33, 0\}\%$. Across the binary 8-cell contrast (Table~\ref{tab:slack-success-linkage} top), $S_{\text{enc}}$ separates the safe regime ($16$--$23$ nats) from the compromised regime ($78$--$138$ nats) with Pearson $r(S_{\text{enc}}, \text{success}) = -0.84$, Spearman $\rho{=}-0.74$, Kendall $\tau_b{=}-0.57$ (tie-corrected), one-sided permutation $p{=}0.007$.

\paragraph{Within-attack-strength dose-response.} An $\eps$-sweep on Spatial vanilla at $\eps\in\{2,4,8,16\}/255$, $N{=}6$ each (Table~\ref{tab:slack-success-linkage} bottom), establishes a graded operational signal: success drops monotonically $\{83, 67, 33, 0\}\%$ as $\eps$ increases. The encoder slack $S_{\text{enc}}$ also varies monotonically with $\eps$, but in the opposite direction ($\{206, 152, 105, 76\}$ nats), because larger $\eps$ allows PGD to align with $\phi$'s top eigendirections and tighten the parallel-Gaussian channel of Eq.~\ref{eq:pca-rhs}. The slack therefore tracks unused encoder bandwidth in a regime-conditional way: across regimes it separates safe from compromised; within a fixed attack family at varying $\eps$, it traces how much channel margin remains as the attack approaches the policy's effective bandwidth limit. We report Cor.~\ref{cor:encoder} as a regime-level diagnostic paired with the raw $\Rob$ axis, and the $\eps$-sweep data as the operational complement showing both quantities decline together as the attack tightens. Calibrating a regime-conditional $S_{\text{enc}}\!\to\! P(\text{success})$ map and extending to real-robot rollouts remain natural follow-ups.

% Auto-generated by scripts/openvla/aggregate_scatter.py
% (S_enc, success) over 4 LIBERO suites x {clean, raw-PGD eps=8/255} + spatial ε sweep, N=6 episodes/cell
\begin{table}[t]
\centering
\footnotesize
\setlength{\tabcolsep}{2.5pt}
\caption{Slack-vs-success linkage on the official OpenVLA evaluation harness \citep{kim2024openvla}. Each cell is $N{=}6$ episodes (tasks $0,1,2$ with two initial states each); raw-image-space PGD, $10$ steps. \emph{Top}: 8-cell binary contrast across 4 LIBERO suites at $\{$clean, $\eps{=}8/255\}$. Pearson $r{=}-0.84$, Spearman $\rho{=}-0.74$, Kendall $\tau_b{=}-0.57$; one-sided permutation $p{=}0.007$ ($10^4$ permutations). \emph{Bottom}: $\eps$ sweep on Spatial vanilla, success $\{83, 67, 33, 0\}\%$ at $\eps\in\{2,4,8,16\}/255$. Within-sweep $S_{\text{enc}}$ co-decreases with success ($r{=}+0.96$); see \S\ref{sec:discussion} for the mechanism.}
\label{tab:slack-success-linkage}
\begin{tabular}{llrrr}
\toprule
Group & Cell & $\eps$ ($/255$) & $S_{\text{enc}}$ (nats) & success \\
\midrule
\multicolumn{5}{l}{\emph{Binary 8-cell (4 suites $\times$ \{clean, raw-PGD $\eps{=}8/255$\})}} \\
\addlinespace[1pt]
Binary & Spatial clean    &  0 &  17.9 & $6/6 = 100\%$ \\
Binary & Object clean     &  0 &  17.0 & $3/6 =  50\%$ \\
Binary & Goal clean       &  0 &  16.1 & $5/6 =  83\%$ \\
Binary & LIBERO-10 clean  &  0 &  22.6 & $5/6 =  83\%$ \\
Binary & Spatial PGD      &  8 & 105.0 & $2/6 =  33\%$ \\
Binary & Object PGD       &  8 &  77.9 & $0/6 =   0\%$ \\
Binary & Goal PGD         &  8 &  96.1 & $2/6 =  33\%$ \\
Binary & LIBERO-10 PGD    &  8 & 138.1 & $0/6 =   0\%$ \\
\midrule
\multicolumn{5}{l}{\emph{$\eps$ sweep on Spatial vanilla raw-PGD ($N{=}6$ each)}} \\
\addlinespace[1pt]
$\eps$-sweep & Spatial PGD &  2 & 206.3 & $5/6 =  83\%$ \\
$\eps$-sweep & Spatial PGD &  4 & 152.2 & $4/6 =  67\%$ \\
$\eps$-sweep & Spatial PGD &  8 & 105.0 & $2/6 =  33\%$ \\
$\eps$-sweep & Spatial PGD & 16 &  76.1 & $0/6 =   0\%$ \\
\bottomrule
\end{tabular}
\end{table}

\section{Conclusion}
\label{sec:conclusion}

We proved an information-theoretic upper bound on the joint capability and robustness any VLA policy with discrete actions can attain, and validated it without a single violation across $308$ Theorem-1 cells: $252$ closed-form Gaussian-VLA, $48$ OpenVLA-7B$+$LIBERO$+$PGD, $4$ Square-Attack, and $4$ multi-step ($T{=}10$). A complementary discrete inequality holds across $144$ cross-architecture cells spanning OpenVLA, OpenVLA-OFT, and SmolVLA. Because the right-hand side depends only on task entropy and attack channel capacity, any architectural change must trade capability against robustness within a fixed budget.

Two artifacts make the bound operational. The encoder-specific corollary (Cor.~\ref{cor:encoder}) tightens the universal pixel-level RHS by over an order of magnitude on a per-experiment basis. The per-defense shift in $I^{\text{PCA-}\phi}$ identifies \emph{where} a defense intervenes: input-side filters move the encoder budget materially, LLM-side adaptation barely at all. We recommend defense papers report the encoder slack and the raw $\Rob$ as paired axes --- the slack diagnoses the channel ceiling, the raw $\Rob$ the realized robustness extracted from it. The bound is non-constructive; closing the slack on real VLAs and sharpening tightness against adaptive attackers will require fundamentally different tools.

\bibliography{refs}

\clearpage
\appendix
\section{Estimator Audit}
\label{sec:estimator-audit}
\label{sec:rob-mine-audit}

This appendix consolidates the estimator-side checks referenced from \S\ref{sec:synthetic} and \S\ref{sec:realmodel}.

\paragraph{Hyperparameter selection (P3).} A $540$-cell sweep over hidden width, depth, EMA decay, and learning rate for MINE on a $d{=}7$ Gaussian identifies a stable region with best mean rel-err $6.9\% \pm 0.9\%$ over $5$ seeds at hidden $=512$, depth $=2$, EMA $=0.999$, lr $=10^{-4}$. The next four configurations sit within $1\%$ of this minimum, so the optimum is broad. All MINE results in the paper use these settings.

\paragraph{Sample-complexity sweep (P4).} A $360$-cell sweep at $d \in \{2,7,32,128\}$, $n \in \{500,\ldots,20{,}000\}$, $15$ seeds shows median MINE rel-err at $d{=}7$ falling from $0.49$ ($n{=}500$) to $0.16$ ($n{=}20{,}000$). Means stay higher ($0.60 \to 0.57$) due to a heavy right tail of failed seeds, so we report medians as the headline number and use $n \ge 5{,}000$ throughout.

\paragraph{Distribution sensitivity (P5).} A $135$-cell non-Gaussian sweep over Laplace, uniform, and Gaussian-mixture distributions gives MINE rel-err median $21\%$ (Laplace), $32\%$ (uniform), $52\%$ (GMM). The bound is distribution-free; the estimator is not.

\paragraph{DPI sanity check (P6).} On $X \to Y \to Z$ across $27$ grid cells $\times 5$ seeds, $0/27$ group-mean pairs violate $I(X;Z) \le I(X;Y) + 0.05$ nats (max diff $+0.031$). Per-seed, $9/135$ pairs exceed the tolerance, all at $d{=}32$ where the critic is sample-starved (variance, not systematic DPI failure). At our P7 grid resolution ($3$ seeds aggregated), DPI holds tightly enough that any Theorem-1 violations cannot be blamed on estimator-level DPI failure.

\paragraph{High-$d$ multi-estimator validation (P8).} A $174$-cell sweep at $d \in \{1,\ldots,256\}$ in two regimes (regime A: $\sigma{=}1$, MI grows with $d$; regime B: $\sigma$ chosen so analytical MI $\approx 5$ nats) re-runs verification with all three estimators. At $d{=}256$ in regime A, MINE rel-err is $93\%$, InfoNCE $92\%$, and KSG is undefined for $d{>}64$. Even with these collapsed estimators, Theorem~\ref{thm:main} holds in $100\%$ of cells. The reason is structural: estimator under-bias on the LHS pushes measured slack \emph{up}, not down. MINE's variational lower bound is one-sided, InfoNCE saturates at $\log K$, and KSG bias is bounded by its $k$-NN scale.

\paragraph{MINE sign artifact ($N$-sweep).} On real OpenVLA cells, MINE returns $I(\Api;\Atildepi) \in [-1.19, -0.14]$ across $48/48$ cells with no monotone $\eps$-trend. To check this is variance not structure, we re-run the Spatial $\eps{=}8$ cell at $N \in \{200, 500, 1000, 2000\}$: $I(\Api;\Atildepi)$ moves from $-0.34$ ($N{=}200$) toward $-0.08$ ($N{=}1000$), consistent with $\sqrt{N}$ variance shrinkage. At $N{=}2000$, both MINE ($-0.13$) and KSG ($k{=}5$, $-0.87$; $k{=}10$, $-0.82$) remain negative, indicating the true value is near zero rather than substantially positive --- consistent with OpenVLA's discrete tokenization (PGD at moderate $\eps$ either fails to flip the argmax token or flips it to a near-random bin). Clamping $\Rob$ to $\max(\Rob, 0)$ is the conservative correction; the bound holds either way.

\paragraph{Per-cell MINE vs.\ $\Rob_{\text{disc}}$ concordance.} Table~\ref{tab:rob-mine-disc-audit} reports both estimators side-by-side on every $48$-cell. The main text uses $\Rob_{\text{disc}}$ as the primary estimator (\S\ref{sec:realmodel}); both estimate the same population quantity, and the bound holds zero-violation under either.
% AUTO-GENERATED appendix audit table: per-cell MINE Rob vs Rob_disc concordance.
% Source: ov7_theorem/v1/manifest.jsonl (MINE) + ov20_rob_disc/manifest.jsonl (K=16, MM-corrected, clamped).
\begin{table}[H]
\centering
\scriptsize
\setlength{\tabcolsep}{4.5pt}
\caption{Per-cell concordance between MINE $\Rob$ and discretized $\Rob_{\text{disc}}$. MINE's negative values are a small-$N$ variational-bound artifact \citep{belghazi2018mine}; $\Rob_{\text{disc}}$ is plug-in MI clamped $\ge 0$. The bound holds zero-violation under either.}
\label{tab:rob-mine-disc-audit}
\begin{tabular}{l r r r}
\toprule
Suite & $\eps$ (px$/255$) & $\Rob_{\text{MINE}}$ & $\Rob_{\text{disc}}$ \\
\midrule
\multirow{4}{*}{Spatial}   &  2 & $-0.93{\pm}0.28$ & $1.26{\pm}0.06$ \\
                            &  4 & $-0.59{\pm}0.04$ & $1.35{\pm}0.09$ \\
                            &  8 & $-0.33{\pm}0.02$ & $1.37{\pm}0.09$ \\
                            & 16 & $-0.39{\pm}0.13$ & $1.46{\pm}0.10$ \\
\midrule
\multirow{4}{*}{Object}    &  2 & $-0.75{\pm}0.17$ & $1.34{\pm}0.15$ \\
                            &  4 & $-0.58{\pm}0.24$ & $1.41{\pm}0.04$ \\
                            &  8 & $-0.75{\pm}0.22$ & $1.50{\pm}0.08$ \\
                            & 16 & $-0.39{\pm}0.20$ & $1.53{\pm}0.12$ \\
\midrule
\multirow{4}{*}{Goal}      &  2 & $-0.67{\pm}0.11$ & $1.46{\pm}0.04$ \\
                            &  4 & $-0.44{\pm}0.27$ & $1.53{\pm}0.28$ \\
                            &  8 & $-0.40{\pm}0.21$ & $1.43{\pm}0.14$ \\
                            & 16 & $-0.80{\pm}0.24$ & $1.27{\pm}0.10$ \\
\midrule
\multirow{4}{*}{LIBERO-10} &  2 & $-0.63{\pm}0.23$ & $0.90{\pm}0.09$ \\
                            &  4 & $-0.50{\pm}0.53$ & $0.94{\pm}0.07$ \\
                            &  8 & $-0.45{\pm}0.33$ & $0.88{\pm}0.02$ \\
                            & 16 & $-0.60{\pm}0.21$ & $0.93{\pm}0.03$ \\
\bottomrule
\end{tabular}
\end{table}

\section{RHS Estimator Menu}
\label{sec:rhs-menu}

We ablate the RHS estimator at Spatial $\eps{=}8/255$ ($N{=}200$, single PGD batch) to justify the encoder-PCA choice for defense comparison in \S\ref{sec:defenses}. Among the four estimators in Table~\ref{tab:rhs-ablation}, only the first three are valid upper bounds on $I(X;\Xtilde)$ for Theorem~\ref{thm:main}; InfoNCE is a lower bound on $I(\phi(X);\phi(\Xtilde))$, useful as a diagnostic floor.
% AUTO-GENERATED — sourced from outputs/openvla/ov14_encoder_pca_rhs/manifest.jsonl
% (spatial, eps=8/255, N=200). InfoNCE from outputs/openvla/ov11_infonce_rhs/spatial_eps8.json.
\begin{table}[!t]
\centering
\caption{RHS estimator menu at Spatial $\eps{=}8/255$, $N{=}200$. Only the first three are valid upper bounds for Theorem~\ref{thm:main}; InfoNCE is a lower bound on $I(\phi(X);\phi(\Xtilde))$, used as a diagnostic only.}
\label{tab:rhs-ablation}
\scriptsize
\setlength{\tabcolsep}{5pt}
\begin{tabular}{l c c r}
\toprule
RHS estimator & dim. & direction & nats \\
\midrule
Isotropic-Gaussian (pixel) & $D{=}49{,}152$ & upper & $\sim 1.2{\cdot}10^5$ \\
PCA-Gaussian (pixel) & rank $1499$ & upper & $\approx 3.95{\cdot}10^3$ \\
PCA-Gaussian ($\phi$, Cor.~\ref{cor:encoder}) & rank $\le 199$ & upper & $\approx 87.1$ \\
InfoNCE on $\phi$ & $d{=}2176$ & lower (diagn.) & $\ge 4.6$ \\
\bottomrule
\end{tabular}
\end{table}

\end{document}